\newtheorem{definition}{Definition}[section]
\newtheorem{theorem}{Theorem}[section]
\newtheorem{result}{Result}[section]
\DeclareMathOperator{\rlangle}{\!\rangle \langle\!}
\DeclareMathOperator{\rrangle}{\!\rangle\!\rangle\!}
\DeclareMathOperator{\llangle}{\!\langle\!\langle\!}
\DeclareMathOperator{\rrllangle}{\!\rangle\!\rangle\!\langle\!\langle\!}
\DeclareMathOperator{\Tr}{\text{Tr}}
\DeclareMathOperator{\sums}{\sum \limits_{\mathbf{x} \in \mathbf{X}} \sum_{\mathfrak{B}\in\mathsf{C}} \sum_{\mathit{i,j}\in \mathfrak{B}}}
\tikzstyle{box} = [rectangle, rounded corners, minimum width=3cm, minimum height=1cm,text centered, draw=black, fill=orange!30]
\tikzstyle{box2} = [rectangle, rounded corners, minimum width=3cm, minimum height=1cm,text centered, draw=black, fill=green!20]
\tikzstyle{box3} = [rectangle, rounded corners, minimum width=3cm, minimum height=1cm,text centered, draw=black, fill=blue!20]
\tikzstyle{io} = [trapezium, trapezium left angle=70, trapezium right angle=110, minimum width=3cm, minimum height=1cm, text centered, draw=black, fill=blue!30]
\tikzstyle{process} = [rectangle, minimum width=3cm, minimum height=1cm, text centered, text width=3cm, draw=black, fill=orange!30]
\tikzstyle{decision} = [diamond, minimum width=3cm, minimum height=1cm, text centered, draw=black, fill=green!30]
\tikzstyle{arrow} = [thick,->,>=stealth]
\begin{document}

\preprint{APS/123-QED}

\title{Indefinite causal key distribution} 

\author{Hector Spencer-Wood}
\email{hector.spencer-wood@glasgow.ac.uk}
\affiliation{School of Physics and Astronomy, University of Glasgow, Glasgow G12 8QQ, Scotland
}

\date{\today}

\begin{abstract}
We propose a quantum key distribution (QKD) protocol that is carried out in an indefinite causal order (ICO). In QKD, one considers a setup in which two parties, Alice and Bob, share a key with one another in such a way that they can detect whether an eavesdropper, Eve, has learnt anything about the key. To our knowledge, in all QKD protocols proposed until now, Eve is detected by publicly comparing a subset of Alice and Bob's key and checking for errors. We find that a consequence of our protocol is that it is possible to detect eavesdroppers \textit{without} publicly comparing any information about the key. Indeed, we prove that it is not possible for eavesdroppers, performing any individual attack, to extract useful information about the shared key without inducing a nonzero probability of being detected. We also prove the security of this protocol against a class of individual eavesdropping attacks. The role ICO plays in causing unusual phenomena in quantum technologies is an important question. By considering it we find a two-way QKD protocol that exhibits a similar private detection feature, albeit with some interesting differences. After noting some implications of these differences and discussing some of the practicalities of our protocol, we conclude that this work is best considered as a first step in applying quantum cryptographic ideas in an ICO.
\end{abstract}

\maketitle

\section{Introduction}

In our everyday, classical world, we are used to events occurring in a well defined order: $A$ happens before $B$ or vice versa. However, it has been suggested that if our universe is to be governed by a quantum theory of gravity, the dynamical nature of causality in general relativity must coexist with the indefiniteness of states in quantum mechanics \cite{Hardy}. So, although in our classical world $A$ comes before $B$ or vice versa, the quantum world allows for both of these options to occur in superposition, in an {\it indefinite causal order} (ICO). Remarkably, without even venturing into the quantum gravitational regime, a physically realisable device, called the {\it quantum switch}, has been proposed that exhibits such an indefinite causal order \cite{Quantum_Switch_original, Caslav_Orig_Process, QuantumSwitch_Romero, Rubino_Experimental_witness, InstrumentsInICO}. The quantum switch has been applied in many different settings both theoretically and experimentally \cite{RozemaReview}. Doing so has hinted at advantages, or interesting differences, when compared with the corresponding definite causal situations, whether it be in computation \cite{Quantum_Switch_original, Comp_Advantage}, channel discrimination \cite{Channel_discrim}, communications \cite{Comm_Compl, Channel_ICO1,Channel_ICO2, Channel_ICO, Indef_Noise}, metrology \cite{Indef_Metrology,Metrology_experiment,Aaron_ICO,freyMetrology} or thermodynamics \cite{Q_Fridge,Q_battery1,Q_battery2,  simonovWork}. Such differences usually present themselves when the actions of the parties within the quantum switch are incompatible (that is, when their operations do not commute with one another). This realisation has led to far reaching proposals: from quantum refrigeration \cite{Q_Fridge} and battery charging \cite{Q_battery1,Q_battery2}, to the violation of fundamental quantum metrological limits \cite{Indef_Metrology,Metrology_experiment}. Motivated by this relation to non-commuting operations, we explore whether another such application: quantum key distribution (QKD), can be performed in an ICO. Indeed, we ask: if it is possible, are there are any interesting consequences that follow?


QKD is concerned with the scenario in which two parties, conventionally named Alice and Bob, would like to share a private key (a string of 0s and 1s) in such a way that they are confident an eavesdropping third party, called Eve, has not been listening in. There have been a number of such protocols proposed. Some are based on directly sending and measuring quantum states in one direction \cite{bb84, Decoy, SARG04} or two \cite{LM05, PingPong}, and others utilise quantum entanglement to derive their security \cite{E91, PingPong}. While the first QKD protocol was a one-way send and measure scheme proposed by Charles Bennett and Giles Brassard in 1984 (BB84) \cite{bb84}, cutting-edge protocols rely on entanglement to ensure practicality, and device-independent security \cite{colbeck_Device_independence, Fully_device_independent, RefRef2, RefRef3}. At their core, the security of these protocols comes from the fact that Eve can be detected. This is possible because, when Eve is present, the quantum phenomenon of measurement disturbance leads to a non-zero probability of error in Bob's key, with respect to Alice's. So, if one could somehow detect these errors induced by Eve, it could be concluded that an eavesdropper had been listening in. The way that these errors are normally detected is by having Alice and Bob publicly compare a subset of their respective raw keys. Now public information, this subset is subsequently discarded regardless of whether they conclude Eve is there or not. 

To our knowledge, this public comparison is a feature of all QKD protocols so far proposed. In this article, we consider how one might adapt the simplest QKD scheme: the BB84 protocol, to an indefinite causal regime. In doing so, we find that we can determine whether eavesdroppers are there or not {\it without} having to publicly compare a subset of the distributed key. Indeed, we show that this is true for {\it any} individual attack performed by eavesdroppers, at least those who cannot infiltrate the causal structure of the protocol. We also provide some understanding of the security of our protocol by proving that it is secure against a class of individual attacks, until a detection probability of approximately $9.6\%$ is induced. If one assumes these are the only attacks available to eavesdroppers, this allows for a secure key to be obtained for error rates of up to $19.2\%$. It is natural to ask whether this ``private" detection is a consequence of ICO or if it is allowed by other features of quantum mechanics. As evidenced by the debate about the role of ICO in the activation of channel capacities \cite{Channel_ICO,abbott,Guerin_NoiseReduction_DCO, Hler_ICO_Channel_resource}, this is an important question to ask in understanding whether indefinite causality is responsible for any given phenomenon. Indeed, we find a protocol that occurs in a well defined order which allows for this same feature of private detection. To do this, however, an extra instance of Alice's operation seems to be required, a property consistent with other discussions of indefinite versus definite causal orderings \cite{Quantum_Switch_original}. Further, motivated by a contrast in possible eavesdropping strategies, we note that there are hints of some more intriguing differences between quantum encryption in an definite causal world, and an indefinite one. Due to the practical concerns we discuss, we ultimately conclude that ICO may not offer an advantage to QKD in the way considered here. However, we note that this is just an initial step in considering this combination of topics, arguing that there are many interesting future lines of research, both from a foundational and practical standpoint.


In Sec.\,\ref{sec:ICKD_Background_Theory} we briefly discuss the general background theory of the two topics of importance in this article: indefinite causal order and quantum key distribution. In Sec.\,\ref{sec:NoEve}, we describe how a key can be distributed between Alice and Bob in an indefinite causal order when no eavesdropper is present. In Sec.\,\ref{sec:OneEve} we introduce a single eavesdropper to gain some intuition of their effects. One eavesdropper location being insufficient to prove the security of this protocol, in Sec.\,\ref{sec:security}, a second and final eavesdropper is introduced, and we explore the security of this protocol by considering a class of individual attacks by the eavesdroppers. In Sec.\,\ref{sec:correlated_and_beyond}, we state our main result: that eavesdroppers, performing {\it any} individual attack, cannot extract any useful information about the shared key without revealing themselves, then outline a roadmap to generalising our security proof to coherent attacks. Following this, in Sec.\,\ref{sec:DefiniteCausalOrder?} we briefly discuss whether this phenomenon is truly a consequence of indefinite causal order, and whether there are any alternate differences between the definite and indefinite cases. Finally, in Sec.\,\ref{sec:Conc}, the findings are summarised along with a discussion of the implementability and practicality of this protocol, as well as possible future lines of research.

\section{Background theory}\label{sec:ICKD_Background_Theory}

\subsection{Indefinite causal order}\label{sec:ICO_Background_Theory}

Suppose two parties, Alice and Bob, hope to act on some state $\rho$ sequentially with the respective operations $\mathcal{A}, \mathcal{B}$ (or more generally, sets of operations, i.e. quantum instruments\footnote{Defined in Appendix\,\ref{sec:instruments_etc}.}) defined using the Kraus operators $\{A_i\},\{B_j\}$, respectively. Normally, at least from a classical perspective, this occurs, as depicted in FIG.\,\ref{fig:DCOvsICO}, in a {\it definite} order: either Alice before Bob,
\begin{equation}
    \rho \to \sum_i A_i \,\rho \,A_i^{\dagger} \to \sum_{i,j} B_j A_i \,\rho \,A_i^{\dagger} B_j^{\dagger},
\end{equation}
or Bob before Alice,
\begin{equation}
    \rho \to \sum_j B_j \,\rho \, B_j^{\dagger} \to \sum_{i,j} A_i B_j \,\rho \, B_j^{\dagger} A_i^{\dagger}.
\end{equation}
In quantum mechanics, however, the order in which Alice and Bob act on $\rho$ can be indefinite - a phenomenon known as \textit{indefinite causal order} (ICO). Take the quantum switch for example\footnote{This is just one example of indefinite causal order, but by far the most understood, and it is what we use throughout this article. For more general discussions, see e.g. Ref.\,\cite{Witnessing_causal_nonseparability}.}, where an extra, control qubit in the state $\omega$ dictates the order in which Alice and Bob act on $\rho$. Much like a classical switch, if we turn it {\it on} and set $\omega = |1\rlangle 1|$, then Alice acts before Bob. Conversely, if we switch it {\it off} and let $\omega = |0\rlangle 0|$, then Bob would go before Alice. However, since $\omega$ is a quantum state, it can be in a superposition of $|0\rangle$ and $|1\rangle$, meaning that Alice and Bob can act on $\rho$ in a {\it controlled superposition} of orders.

Let us write this down mathematically. As mentioned, if the control qubit is in the state $\omega = |1\rlangle 1|$, then Alice acts on the target qubit using $\mathcal{A}$ before Bob acts on it with $\mathcal{B}$, and if $\omega = |0\rlangle 0|$, then $\mathcal{B}$ occurs before $\mathcal{A}$. Following the notation of Ref.\,\cite{Indef_Noise}, we can therefore write the quantum switch as the following operation
\begin{equation}
    \rho \to \mathcal{S}_{\omega} (\mathcal{A},\mathcal{B})(\rho) = \sum_{i,j} S_{ij}\, (\rho \otimes \omega)  \,S_{ij}^{\dagger},
\end{equation}
where the Kraus operators $\{S_{ij}\}$ are defined as
\begin{equation}
    S_{ij} = A_i B_j \otimes |0\rlangle 0| + B_j A_i  \otimes |1\rlangle 1|
\end{equation}
This is depicted in FIG.\,\ref{fig:DCOvsICO}(c).

\begin{figure*}
    \centering
    \includegraphics[width=0.99\textwidth]{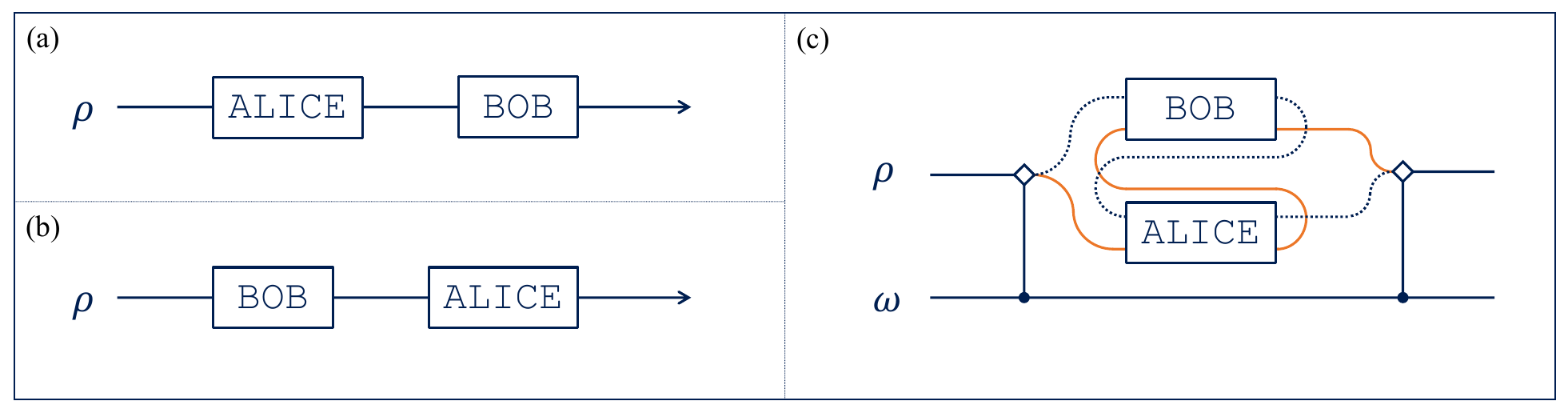}
    \caption{Quantum mechanics allows for more freedom in the ordering of events: (a) Alice can act on a state $\rho$ before Bob, (b) Bob before Alice, or (c) in a superposition of both orders, controlled on some quantum state $\omega$.}
    \label{fig:DCOvsICO}
\end{figure*}

\subsection{Quantum key distribution}

Aside from being in an indefinite causal order with each another, Alice and Bob also like to share private keys with one other to use for various cryptographic tasks. In quantum key distribution (QKD), this is often done (for example in the original implementation: BB84 \cite{bb84}) by having Alice send a key to Bob using encoded quantum systems. To do this, she prepares qubits in states that correspond to the 0s and 1s of the private key and sends them to Bob to be measured. Indeed, in BB84, Alice and Bob respectively prepare and measure, independently and randomly, in one of two mutually unbiased bases. In this work, we will use the Pauli $x$ and $z$-bases: $\{ |0\rangle, |1\rangle \}$ and $\{ |+\rangle, |-\rangle\}$ respectively. If Alice (Bob) prepared (measured) the qubit to be in the state $|0\rangle$ or $|+\rangle$, she (he) will have a corresponding key bit of $0$. Likewise, $|1\rangle, |-\rangle$ corresponds to the key bit $1$. Once Bob has measured the qubit Alice sent him, the two parties publicly discuss which bases they chose. If they chose different bases, there is only a 50\% chance of them agreeing on the corresponding key bit value, so they discard it. If, however, they chose the same basis, when no eavesdroppers are present, Bob's measurement result is guaranteed to correspond to the state prepared by Alice, assuming noiseless and lossless transmission, as we will do throughout. Therefore, Alice and Bob can use the corresponding ordered set of key bit values as their shared key.

The security of this protocol comes from the fact that when an eavesdropper, Eve, intercepts the transmission from Alice to Bob and tries to learn the key bit value being shared, she disturbs the quantum state being sent with non-zero probability\footnote{This is a consequence of quantum measurement disturbance: to learn which state Alice prepared the system in, Eve must measure it. Since she has no idea which of the mutually unbiased bases Alice chose to prepare the state in, she cannot choose a measurement that will always leave the state undisturbed.}. This means that, even if Alice and Bob agree on the basis chosen, there is a non-zero probability that they disagree on the state of the qubit, which implies that there is a chance of an error in Bob's key with respect to Alice's. To detect these errors, Alice and Bob take a subset of their sifted keys and compare them {\it publicly}. Since it has to be done publicly, this subset must subsequently be discarded, regardless of whether errors, and therefore Eve, were detected or not. Let us now see how this protocol can be adapted to an indefinite causal setting.

\section{Quantum key distribution in an indefinite causal order}

\subsection{Indefinite causal key distribution with no eavesdroppers}\label{sec:NoEve}

In BB84, Alice would prepare the qubits to be sent to Bob in a certain state. When considering an indefinite causal ordered scheme, Alice is simultaneously sending and receiving the qubit from Bob (and vice versa). So having one party be the ``preparer" of the state, and the other the ``measurer" makes little sense. To avoid this, we let {\it both Alice and Bob} measure the qubit being used, which, because of how states are updated following projective measurements, allows them to both be the preparer and measurer of the shared qubit. This method has similarities to how a key is generated in protocols like E91 \cite{E91}. Taking this approach, each bit of the key would be the result of projective measurements performed by Alice and Bob on the shared qubit, initially in the state $\rho$, but only when Alice and Bob agree they had performed the {\it same} measurement. Due to the projective nature of these measurements, Alice and Bob would obtain identical measurement outcomes and therefore share an identical key bit (assuming noiseless and lossless channels).

Thinking of the key generation in this way, we can consider a scheme in which a key is distributed in an indefinite causal order. Here, we send a state $\rho$ to two parties, Alice and Bob, in a controlled superposition of two orders: Alice before Bob and Bob before Alice. As shown in FIG.\,\ref{fig:AB}, and as discussed in Sec.\,\ref{sec:ICO_Background_Theory}, this superposition is controlled by the qubit state $\omega$.
Alice and Bob then both make a random choice to measure either in the Pauli $z$-basis: $\{ |0\rangle, |1\rangle \}$, or $x$-basis: $\{|+\rangle, |-\rangle\}$. We can think of Alice and Bob as acting on the state with quantum channels $\mathcal{A}, \mathcal{B} \equiv \mathcal{A}$ respectively\footnote{We denote Alice and Bob's channels using different letters for clarity, but they are identical and can be interchanged whenever it is useful.}, defined by the respective sets of Kraus operators $\{A_i\}, \{B_j\}$, such that
\begin{subequations}\label{Kraus}
\begin{alignat}{4}
    A_0 &\equiv B_0 = \frac{1}{\sqrt{2}}|0\rangle\langle 0|,\\
    A_1 &\equiv B_1 = \frac{1}{\sqrt{2}}|1\rangle\langle 1|,\\
    A_+ &\equiv B_+ = \frac{1}{\sqrt{2}}|+\rangle\langle +|,\\
    A_- &\equiv B_- = \frac{1}{\sqrt{2}}|-\rangle\langle -|.
\end{alignat}
\end{subequations}
Here, the factors of $1/\sqrt2$ arise because we are assuming Alice and Bob are both equally likely to measure in the $x$ or $z$-basis\footnote{This is not always the case in practical QKD, often, one basis is taken to be heavily biased over the other \cite{Asymmetric}.}. It should be made clear that Alice and Bob are not just putting $\rho$ through some quantum channel, they are indeed performing the stated measurements. They could, for example, store their measurement results in a four dimensional ancillary register $R$ (available only in their respective laboratories) initially in the state $|m_0\rangle^R$. The corresponding Kraus operators that would achieve this would have the form $A'_i \equiv B_i' = |i\rangle\langle i| \otimes |m_i\rangle^R \langle m_0|^R/\sqrt{2}$, where $|m_i\rangle$ encode the four possible measurement outcomes $i\in \{0,1,+,-\}$ in orthogonal states: $\langle m_i | m_j\rangle = \delta_{ij}$. Having said this, since these ancillary systems factor out, we can take $A_i,B_i$ to have the form given in Eq.\,(\ref{Kraus}).
\begin{figure}
    \centering
    \includegraphics[width=0.48\textwidth]{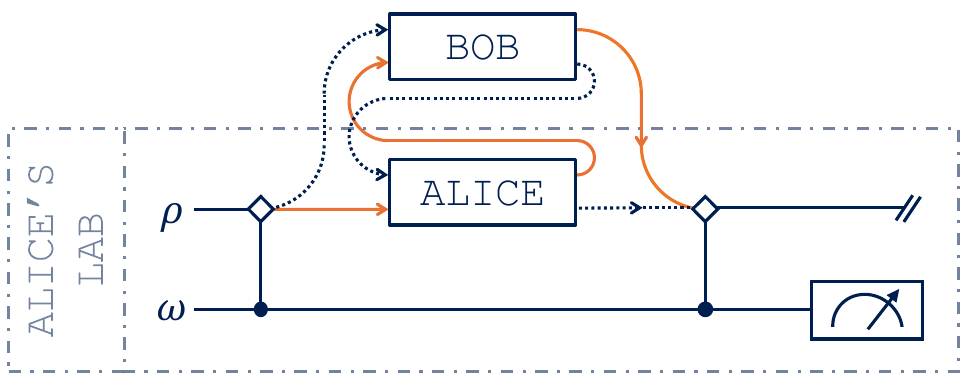}
    \caption{Indefinite causal key distribution with no eavesdroppers. A key is shared between Alice and Bob by sending a state $\rho$ to them in a superposition of orders controlled by the state $\omega$. Alice and Bob perform projective measurements randomly in either the Pauli $x$ or $z$-basis. After discarding cases in which Alice and Bob measured in different bases, they are left with identical keys. Regardless of the initial state $\omega$ of the control qubit, $\omega$ never changes when there are no eavesdroppers, a phenomenon we see not to be true when an eavesdropper is introduced.}
    \label{fig:AB}
\end{figure}

Following their measurements, Alice and Bob then publicly discuss the basis they chose for each measurement and only keep the measurement outcomes in which they measured $\rho$ in the same basis. Assuming no errors occur between Alice and Bob's measurements, their keys, made up of the measurement outcomes they kept, should be identical. In what follows, similarly to what we discussed earlier, a measurement outcome of $0$ and $+$ will correspond to a $0$ in the key. Likewise, $1$ and $-$ correspond to a $1$ in the key.

Let's see in more detail out what happens to the state $\rho$ when it is put through the setup in FIG.\,\ref{fig:AB}. Repeating what was written in Sec.\,\ref{sec:ICO_Background_Theory}, the channel that $\rho$ goes through, is given by
\begin{equation}\label{S_Channel}
    \mathcal{S}_{\omega}(\mathcal{A},\mathcal{B})(\rho) = \sum \limits_{i,j\in \mathfrak{I}} S_{ij} \rho \otimes \omega S_{ij}^{\dagger},
\end{equation}
where
\begin{equation}
    S_{ij} = A_i B_j \otimes |0\rangle\langle 0| + B_j A_i \otimes |1\rangle\langle 1|,
\end{equation}
and we define the set containing the Kraus operator indices by 
\begin{equation}
    \mathfrak{I} := \{0,1,+,-\}.
\end{equation}
After some algebra, it follows that Eq.\,(\ref{S_Channel}) can be rewritten as follows:
\begin{multline}\label{Channel_no_eavesdroppers}
    \mathcal{S}_{\omega} (\mathcal{A}, \mathcal{B})(\rho) = \frac14 \sum \limits_{i,j\in \mathfrak{I}} \big(\{A_i, B_j\} \rho \{A_i, B_j\}^{\dagger} \otimes \omega 
    \\+ [A_i, B_j] \rho [A_i, B_j]^{\dagger} \otimes \sigma_z\omega \sigma_z\big),
\end{multline}
where $\sigma_z$ is the $z$ Pauli operator.

Now, recall that, after public discussion, Alice and Bob only keep the cases in which they performed a measurement in the same basis. Therefore, following this discussion, the state becomes
\begin{multline}
    \mathcal{S}_{\omega} (\mathcal{A}, \mathcal{B})(\rho) \to \frac12 \sum \limits_{\mathfrak{B} \in \mathsf{C}} \sum \limits_{i,j\in \mathfrak{B}} \big( \{A_i, B_j\} \rho \{A_i, B_j\}^{\dagger} \otimes \omega 
    \\+ [A_i, B_j] \rho [A_i, B_j]^{\dagger} \otimes \sigma_z \omega \sigma_z \big),
\end{multline}
where the prefactor is found by requiring normalisation, $\mathsf{C} = \{\{0,1\}, \{+,-\}\}$, and $\mathfrak{B}$ labels the elements of $\mathsf{C}$. Noting the form of $A_k, B_k$ given in Eq.\,(\ref{Kraus}), the terms in these sums have the following properties
\begin{align}
    \begin{aligned}
    \{A_i,B_j\} &= \sqrt2 A_i\delta_{ij},\\
    [A_i, B_j] &= 0,
    \end{aligned}
\end{align}
for all $i,j$ from the same basis, where $\delta_{ij}$ is the Kronecker delta. This confirms that Alice and Bob must agree on their measurement outcomes, meaning we can successfully share a key in an ICO. Overall, we have that
\begin{equation}\label{NoEveFinalState}
    \mathcal{S}_{\omega} (\mathcal{A}, \mathcal{A})(\rho) \to \sum \limits_{i\in \mathfrak{I}} A_i \rho A_i^{\dagger} \otimes \omega.
\end{equation}
So, when there are no eavesdroppers present, the control state $\omega$ stays in its original state and this situation is ultimately no different from when the causal order is definite. Let us introduce an eavesdropper to see what changes.

\subsection{Introducing an eavesdropper}\label{sec:OneEve}

\begin{figure}
    \centering
    \includegraphics[width=0.48\textwidth]{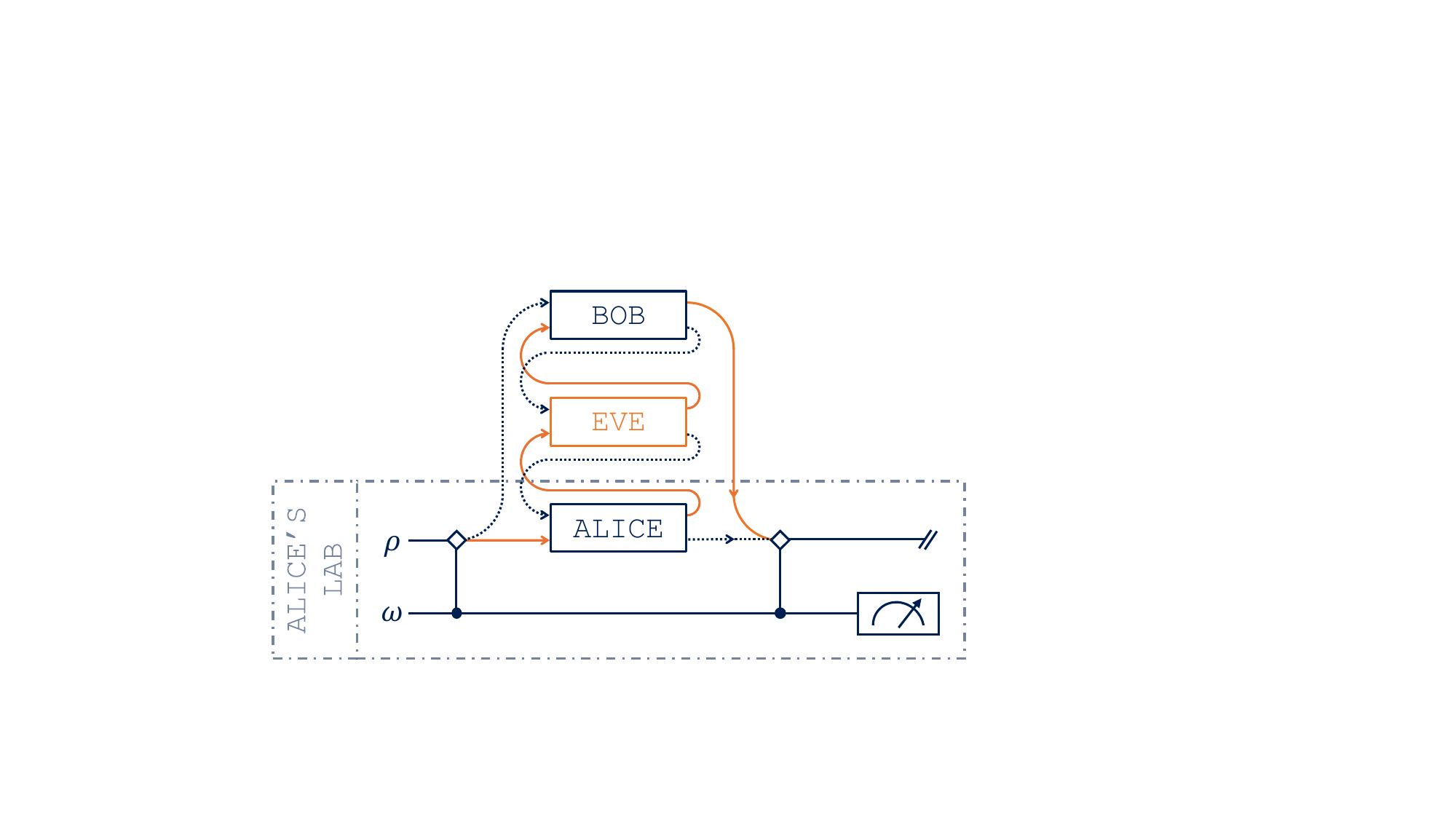}
    \caption{Indefinite causal key distribution with a single eavesdropper, Eve, between Alice and Bob.}
    \label{fig:BEA}
\end{figure}

Notice that, unlike in BB84, there are multiple places an eavesdropper can reside (see FIG.\,\ref{fig:YBEA}). Having said this, to gain some intuition about the effects of eavesdroppers, let us first consider introducing just a single eavesdropper, Eve. In the original BB84 protocol, Eve interacts with the qubit when in transit {\it between} Alice and Bob, regardless of who the sender/receiver was:
\begin{subequations}
    \begin{alignat}{4}
        \rho &\xrightarrow[\text{sends}]{\text{Alice}} \sum_{i,j,k} B_k E_j A_i \,\rho \,A_i^{\dagger} E_j^{\dagger} B_k^{\dagger},\\
        \rho &\xrightarrow[\text{sends}]{\text{Bob}} \sum_{i,j, k} A_i E_j B_k \,\rho \, B_k^{\dagger} E_j^{\dagger} A_i^{\dagger}.
    \end{alignat}
\end{subequations}
Here, for simplicity\footnote{More generally, Eve has access to a quantum instrument. This is discussed in more detail in Appendix\,\ref{sec:Correlated}.}, denote the channel corresponding to Eve's measurement by $\mathcal{E}$, defined by the Kraus operators $\{E_i\}$.

Now, suppose Alice and Bob carry out the indefinite causal QKD method described in the previous subsection and Eve takes motivation from the definite causal case, setting herself up in between Alice and Bob. This scenario is depicted in FIG.\,\ref{fig:BEA}.
As before, allowing a state $\rho$ to be acted on by Alice, Eve and Bob in an indefinite causal order controlled by $\omega$, the channel $\rho$ passes through is given by
\begin{equation}
    \mathcal{S}_{\omega}(\mathcal{A}, \mathcal{E}, \mathcal{B})(\rho) = \sum \limits_{i,j,k} S_{ijk} \rho \otimes \omega S_{ijk}^{\dagger},
\end{equation}
where
\begin{align}\label{S_ijk_orig}
    S_{ijk} :=& A_i E_j B_k \otimes |0\rangle\langle0| + B_k E_j A_i \otimes |1\rangle\langle1|\nonumber \\
    =& \frac12 \{A_i, E_j, B_k\} \otimes \mathbbm1 + \frac12 [A_i, E_j, B_k] \otimes \sigma_z
\end{align}
such that
\begin{subequations}
\begin{alignat}{4}
    \{A_{i}, E_{j}, B_{k}\} &:= A_i E_j B_k + B_k E_j A_i,\\
    [A_{i}, E_{j}, B_{k}] &:= A_i E_j B_k - B_k E_j A_i.
\end{alignat}
\end{subequations} 
After some algebra, and following the basis comparison,
\begin{widetext}
\begin{equation}\label{keepState}
    \mathcal{S}_{\omega} (\mathcal{A}, \mathcal{E}, \mathcal{B})(\rho) \to  \rho_{\text{sifted}} =\frac12\sum \limits_j \sum \limits_{\mathfrak{B} \in \mathsf{C}} \sum \limits_{i,k\in \mathfrak{B}} \left(\{A_{i}, E_{j}, B_{k}\} \rho \{A_{i}, E_{j}, B_{k}\}^{\dagger} \otimes \omega + [A_{i}, E_{j}, B_{k}] \rho [A_{i}, E_{j}, B_{k}]^{\dagger} \otimes \sigma_z \omega \sigma_z\right).
\end{equation}
\end{widetext}

From this, we can see that, as before, the $\omega$ terms survive. But more interestingly, notice that the $\sigma_z \omega \sigma_z$ terms \textit{can} survive too. For example, suppose Alice and Bob measure in the $z$-basis and Eve measures in the $x$-basis, then it is possible for Alice to obtain an outcome of 0, and Bob an outcome of 1. This combination allows for $[A_0, E_{\pm}, B_1] \neq 0$.

We may therefore hypothesise that if Eve attempts to extract information about the state when in between Alice and Bob, she induces a nonzero $\sigma_z \omega \sigma_z$ term. So, if we were to let $\omega = |+\rangle\langle +|$ (and therefore $\sigma_z \omega \sigma_z = |-\rangle\langle-|$), if someone were to perform the measurement $\{ |+\rangle\langle +|, |-\rangle\langle -|\}$ on the control state $\omega$, and obtain an outcome of $-$, they could conclude that there was an eavesdropper in between Alice and Bob. Moreover, since Alice can keep and measure the control qubit in her own lab, this would mean that no subset of the distributed key need be publicly compared and discarded to determine the presence of Eve. Let us now explore how robust this hypothesis is. 

\section{Security against individual attacks}\label{sec:security}

As alluded to earlier, there are two possible positions eavesdroppers can exist: in between, or outwith Alice and Bob's encoding operations. Let's see what happens when two eavesdroppers, Eve and Yves, are introduced. In particular, in this work, we consider {\it individual} attacks where Eve and Yves act on each distributed state \textit{separately} \cite{QKDreview} and we leave more general attacks for future work (although we outline how one might analyse them in the following section). There being more than one eavesdropper allows for cooperative strategies, the most general of which utilise both quantum and classical correlations between Eve and Yves's operations. Possible correlations include everything from shared entangled ancilla states, to quantum channels, to more general indefinite causal structures and can be described mathematically by Eve and Yves sharing a {\it process matrix} $W^{\tilde{E} \tilde{Y}}$, as depicted in FIG.\,\ref{fig:YBEA}. We discuss this scenario, which we will often call the ``fully correlated" case, in more depth in Appendix\,\ref{sec:Correlated}.

\begin{protocol}
    \centering
    \begin{mdframed}[style=mystyle]
    \begin{enumerate}
    \item Alice prepares the state $\rho = |y\rlangle y|$, where $|y\rangle = (|0\rangle + i |1\rangle)/\sqrt2$, to be distributed along with a control qubit state $\omega = |\!+\rlangle +|$ that remains in her lab.
    \item The state $\rho$ is distributed to Alice and Bob's measuring devices in an indefinite causal order, controlled on $\omega$. 
    \item Alice and Bob non-destructively measure $\rho$ in either the $x$ or $z$-basis, chosen randomly with probability $1/2$. Measurement outcomes $0, +$ correspond to a key bit $0$, and outcomes $1, - $ correspond to a key bit of $1$.
    \item Alice and Bob compare the bases they chose and only keep the cases in which they agree.
    \item For each (sifted) state $\rho$ distributed, Alice measures the corresponding control qubit state $\omega$ in the $x$-basis. If an outcome $+$ is obtained, carry on. If an outcome $-$ is found, Alice concludes eavesdroppers are present after which, she either aborts the key distribution, or she and Bob go ahead with privacy amplification and error correction (not discussed in this work).
    \end{enumerate}
    \end{mdframed}
    \caption{Summary of the proposed indefinite causal key distribution protocol.}
    \label{protocol:protocol}
\end{protocol}

In this section, we consider a subclass of these individual eavesdropping strategies where Eve and Yves's ancilliary systems (that they use to aid in their attack) are separable but jointly measurable, and we prove the security of our protocol (which is summarised in Protocol\,\ref{protocol:protocol}) against them. Investigation into the full security proof of this protocol is beyond the scope of this work. This being said, the attacks considered in this section provide us with some useful understanding about the security of this protocol. The methods used follow closely those used in \cite{LM05} where the authors consider a two-way deterministic quantum communications protocol in which quantum states are sent back and forth between Alice and Bob. Although they are working in a definite causal order, in this protocol, the eavesdropper has access to the state at two different points, which is why the same techniques used can be applied to our protocol.

\subsection{Problem setup}

Let $S$ denote the distributed qubit, initially prepared in the state $\rho$. In the scenario we consider, we assume that Alice and Bob would like their shared key to contain, on average, equal numbers of 0s and 1s. Therefore, we can make a natural choice for the initial state of $S$: $\rho = |y\rlangle y|$, where $|y\rangle = (|0\rangle + i |1\rangle)/\sqrt2$. We assume the preparation of this state can be done securely within Alice's lab, perhaps via a Pauli-$y$ measurement to disentangle it from any adversaries. Taking this approach, we'd assume such a measurement device is trusted, as we do with all of the measurement devices in this protocol. 

Now, if Alice is in the lab in which the state $\rho$ is created and where $\omega$ resides, there are two places eavesdroppers, who we call Eve and Yves, can be located. This setup is shown in FIG.\,\ref{fig:YBEA}. In addition to the state $\rho$ being sent between Alice and Bob, Eve and Yves also have access to independent ancilliary quantum systems $\tilde{E},\tilde{Y}$ (respectively) initially in the states $\varepsilon := |\varepsilon\rangle\langle \varepsilon|$ and $\eta := |\eta\rangle\langle \eta|$ respectively\footnote{Note that we are not here considering the fully correlated case of Eve and Yves sharing an arbitrary process matrix $W^{\tilde{E}\tilde{Y}}$.}. Eve and Yves perform the respective unitaries $U_E^{(S\tilde{E})}=:U_E$, $U_Y^{(S\tilde{Y})} =:U_Y$ on the joint space of the distributed state $\rho$ and the spaces of their respective ancillae $\varepsilon, \eta$. Following this, the eavesdroppers perform some joint measurement on the ancillae to try and gain some information about Alice and Bob's shared key. Note that Eve and Yves do, therefore, cooperate in this scenario, although not in the most general way possible.
\begin{figure}
    \centering
    \includegraphics[width=0.48\textwidth]{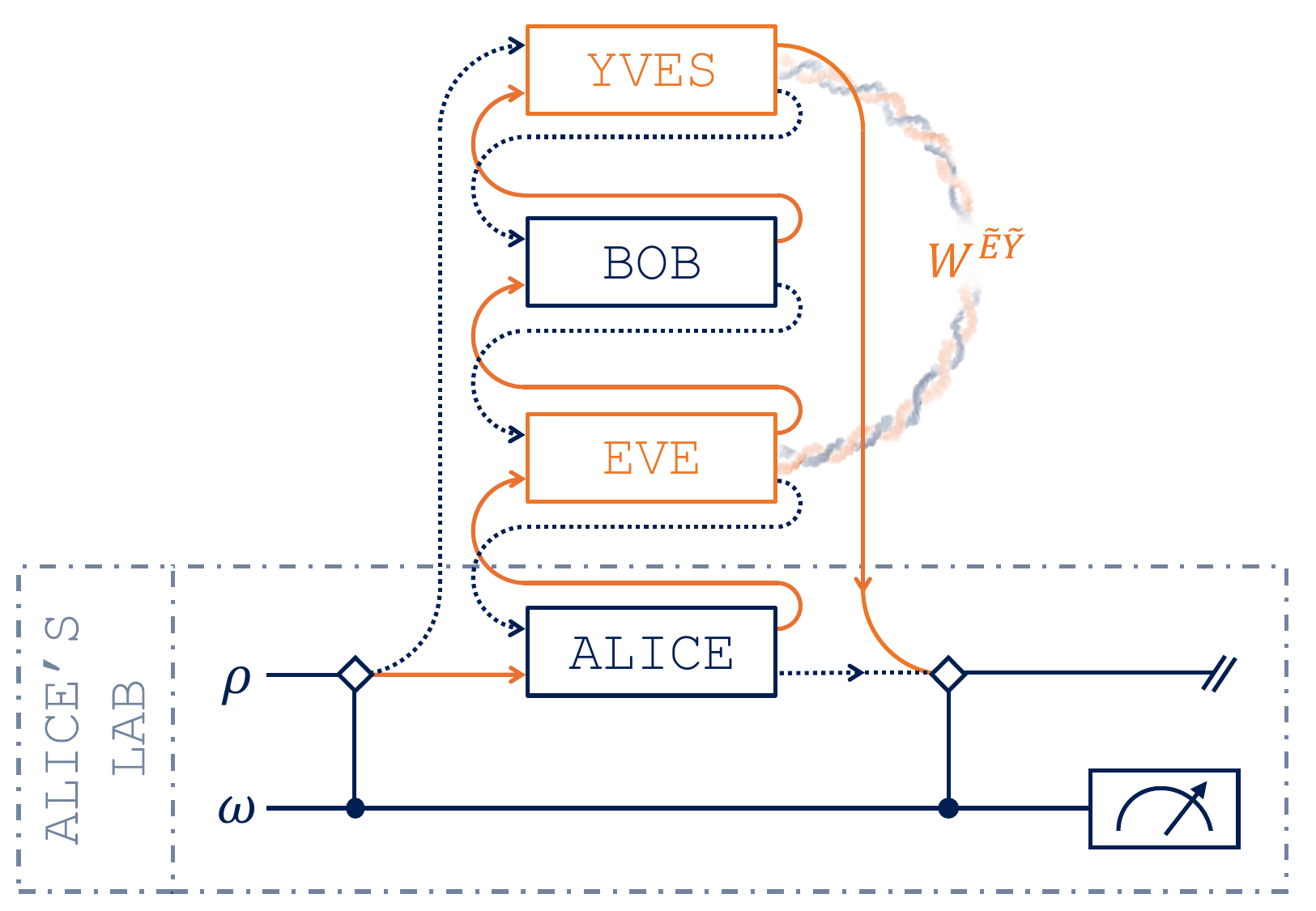}
    \caption{Indefinite causal quantum key distribution with two eavesdroppers Eve and Yves. The fact there are now two eavesdroppers means correlated attacks are possible. All possible correlations are described mathematically by a process matrix $W^{\tilde{E}\tilde{Y}}$ shared by Eve and Yves.}
    \label{fig:YBEA}
\end{figure}

For $k\in\{0,1,+,-\}$, the eavesdroppers' unitaries are performed most generally as follows \cite{LM05}:
\begin{subequations}\label{GeneralU_E&Y}
\begin{alignat}{4}
    U_E^{(S\tilde{E})} |k\rangle^{(S)} |\varepsilon\rangle^{(\tilde{E})} &= |k\rangle|\varepsilon_{kk}\rangle +|\bar{k}\rangle |\varepsilon_{k \bar{k}}\rangle,\\
    U_Y^{(S\tilde{Y})} |k\rangle^{(S)} |\eta\rangle^{(\tilde{Y})} &= |k\rangle|\eta_{kk}\rangle +|\bar{k}\rangle |\eta_{k \bar{k}}\rangle,
\end{alignat}
\end{subequations}
where $|\varepsilon_{mn}\rangle$ and $|\eta_{mn}\rangle$ are, in general, unnormalised and non-orthogonal, and $\bar{k}$ is taken to mean ``not $k$". For simplicity, we follow the approach of \cite{wlog, LM05} and take $\langle \varepsilon_{kl}|\varepsilon_{mn}\rangle, \langle \eta_{kl}| \eta_{mn}\rangle \in \mathbb{R} ~\forall k,l,m,n\in \{0,1,+,-\}$ which allows us to set\footnote{Let $k\in \{0,1\}$ first. Since the system $S$ is two dimensional, by the Schmidt decomposition \cite{nielsenAndChuang}, we can write $U_E |k\rangle |\varepsilon\rangle = |\psi_k\rangle|\xi_{kk} \rangle +|\psi_{\bar{k}}\rangle |\xi_{k\bar{k}}\rangle$ such that $\langle\psi_k | \psi_{\bar{k}} \rangle = 0 = \langle \xi_{kk} | \xi_{k\bar{k}}\rangle$. Noting that there exist $\theta_k\in [0,\pi], \phi_k \in [0,2\pi)$ such that $|\psi_k\rangle = \cos (\theta_k/2)|0\rangle + e^{i\phi_k}\sin(\theta_k/2)|1\rangle, ~ |\psi_{\bar{k}}\rangle = -\sin (\theta_k/2)|0\rangle + e^{i\phi_k}\cos(\theta_k/2)|1\rangle$, we can rewrite our unitary operation as $U_{E}|k\rangle |\varepsilon\rangle = |k\rangle |\varepsilon_{kk}\rangle + |\bar{k}\rangle | \varepsilon_{k\bar{k}}\rangle$ such that $|\varepsilon_{kk}\rangle = \cos(\theta_k/2)|\xi_{kk}\rangle - \sin(\theta_k/2)|\xi_{k\bar{k}}\rangle, ~ |\varepsilon_{k\bar{k}}\rangle = e^{i\phi_k}\big[\sin(\theta_k/2)|\xi_{kk}\rangle + \cos(\theta_k/2)|\xi_{k\bar{k}}\rangle\big]$. A little algebra confirms that $\langle \varepsilon_{kk}|\varepsilon_{k\bar{k}}\rangle = 0$. Following the same argument with $k\in \{+,-\}$ leads to $\langle \varepsilon_{kk}|\varepsilon_{\bar{k}k}\rangle = 0$. Identical reasoning applies to the $U_Y$ case.} $\langle \varepsilon_{kk} | \varepsilon_{k\bar{k}}\rangle = \langle \varepsilon_{kk} | \varepsilon_{\bar{k}k}\rangle = 0 = \langle \eta_{kk} | \eta_{k\bar{k}}\rangle = \langle \eta_{kk} | \eta_{\bar{k}k}\rangle , \forall k\in \{0,1, +, -\}$. Note that we, from now on, will drop the superscripts unless it is unclear which space is which. Note also, that\footnote{To see this explicitly, we perform $U_E |\pm\rangle |\varepsilon\rangle = U_E(|0\rangle|\varepsilon\rangle \pm |1\rangle |\varepsilon\rangle)/\sqrt2$ using Eq.\,(\ref{GeneralU_E&Y}a). This results in $(|0\rangle |\varepsilon_{00}\rangle + |1\rangle|\varepsilon_{01}\rangle \pm |1\rangle |\varepsilon_{11}\rangle \pm |0\rangle |\varepsilon_{10}\rangle)/\sqrt2 = \big[ |0\rangle (|\varepsilon_{00}\rangle \pm |\varepsilon_{10}\rangle) + |1\rangle(|\varepsilon_{01}\rangle \pm |\varepsilon_{11}\rangle)\big]/\sqrt2$, which can be rewritten (ignoring the global phase $\pm1$) as $\big[|\pm\rangle (|\varepsilon_{00}\rangle \pm |\varepsilon_{01}\rangle \pm |\varepsilon_{10}\rangle +|\varepsilon_{11}\rangle) + |\mp\rangle (|\varepsilon_{00}\rangle \mp |\varepsilon_{01}\rangle \pm |\varepsilon_{10}\rangle +|\varepsilon_{11}\rangle)\big]/2.$ The form of $|\varepsilon_{\pm \pm}\rangle$, $|\varepsilon_{\pm \mp}\rangle$ follow from this. A similar argument can be used to derive the various $|\eta_{mn}\rangle$ expressions.}
\begin{subequations}
    \begin{alignat}{4}
        |\varepsilon_{\pm \pm}\rangle &= \frac12(|\varepsilon_{00}\rangle \pm |\varepsilon_{01}\rangle \pm |\varepsilon_{10}\rangle + |\varepsilon_{11}\rangle),\\
        |\varepsilon_{\pm \mp}\rangle &= \frac12(|\varepsilon_{00}\rangle \mp |\varepsilon_{01}\rangle \pm |\varepsilon_{10}\rangle - |\varepsilon_{11}\rangle),
    \end{alignat}
\end{subequations}
and
\begin{subequations}
    \begin{alignat}{4}
        |\eta_{\pm \pm}\rangle &= \frac12(|\eta_{00}\rangle \pm |\eta_{01}\rangle \pm |\eta_{10}\rangle + |\eta_{11}\rangle),\\
        |\eta_{\pm \mp}\rangle &= \frac12(|\eta_{00}\rangle \mp |\eta_{01}\rangle \pm |\eta_{10}\rangle - |\eta_{11}\rangle).
    \end{alignat}
\end{subequations}
As our input state is $|y\rangle,$ the following will also be useful to our derivation:
\begin{subequations}
    \begin{alignat}{4}
        |\eta_{y y}\rangle &= \frac12(|\eta_{00}\rangle - i|\eta_{01}\rangle + i|\eta_{10}\rangle + |\eta_{11}\rangle),\\
        |\eta_{y \bar{y}}\rangle &= \frac12(|\eta_{00}\rangle + i |\eta_{01}\rangle + i|\eta_{10}\rangle - |\eta_{11}\rangle).
    \end{alignat}
\end{subequations}

When $k\in \{0,1\}$, we define $\langle \varepsilon_{kk}| \varepsilon_{kk} \rangle = F$, $\langle \varepsilon_{k\bar{k}}| \varepsilon_{k\bar{k}} \rangle = D $ and $\langle \eta_{kk}| \eta_{kk} \rangle = F'$, $\langle \eta_{k\bar{k}}| \eta_{k\bar{k}} \rangle = D'$, which, as mentioned, are taken to be positive real numbers. These values relate to the probability that Eve and Yves leave the distributed state unaffected. In order to ensure unitarity, 
\begin{subequations}\label{unitary_conditions}
    \begin{alignat}{4}
        F+D &= 1 = F' + D',\\
        \langle\varepsilon_{00}| \varepsilon_{10}\rangle + \langle\varepsilon_{01}| \varepsilon_{11}\rangle &= 0 = \langle\eta_{00}| \eta_{10}\rangle + \langle\eta_{01}| \eta_{11}\rangle.
    \end{alignat}
\end{subequations}
Since $|\varepsilon_{kl}\rangle, |\varepsilon_{\bar{k}\bar{l}}\rangle$ are generally non-orthogonal (likewise for $|\eta_{kl}\rangle$), we take
\begin{subequations}
    \begin{alignat}{4}
        \langle \varepsilon_{00} | \varepsilon_{11}\rangle &= F\cos{x},\\
        \langle \varepsilon_{01} | \varepsilon_{10}\rangle &= D\cos{y},\\
        \langle \eta_{00} | \eta_{11}\rangle &= F'\cos{x'},\\
        \langle \eta_{01} | \eta_{10}\rangle &= D'\cos{y'},
    \end{alignat}
\end{subequations}
where $x,y,x',y'\in [0,\pi/2]$. We can think of $x,y ~(x',y')$ as dictating the distinguishability between Eve's (Yves's) possible ancilla states.

Similarly to in the previous section, following the basis comparison step, the state of the entire system is updated as follows:
\begin{widetext}
\begin{multline}\label{rhoKeepUnitary}
    \rho\otimes \varepsilon \otimes \eta \otimes \omega \to  \rho_{\text{sifted}} = \frac12 \sum \limits_{\mathfrak{B}\in \mathsf{C}} \sum \limits_{j,k\in \mathfrak{B}} \Big( \{U_Y, B_j, U_E , A_k\} \rho \otimes \varepsilon \otimes \eta \{U_Y, B_j, U_E , A_k\}^{\dagger} \otimes \omega
    \\ \hspace{1.8cm} + [U_Y, B_j, U_E , A_k] \rho \otimes \varepsilon \otimes \eta [U_Y, B_j, U_E , A_k]^{\dagger} \otimes \sigma_z \omega \sigma_z\\
    \hspace{4.5cm} + \{U_Y, B_j, U_E , A_k\} \rho \otimes \varepsilon \otimes \eta [U_Y, B_j, U_E , A_k]^{\dagger} \otimes \omega \sigma_z \\
    +[U_Y, B_j, U_E , A_k] \rho \otimes \varepsilon \otimes \eta \{U_Y, B_j, U_E, A_k\}^{\dagger} \otimes \sigma_z \omega \Big),
\end{multline}    
\end{widetext}
where
\begin{subequations}
    \begin{alignat}{4}
        \{U_Y, B_j, U_E , A_k\} &:= U_Y B_j U_E A_k + A_k U_E B_j U_Y,\\
        [U_Y, B_j, U_E , A_k] &:= U_Y B_j U_E A_k - A_k U_E B_j U_Y.
    \end{alignat}
\end{subequations}
Denoting Eve and Yves's joint strategy using $\mathcal{Z}$, we are now equipped to calculate the following: 
\begin{enumerate}
    \item Minimum probability of detection: $d$.
    \item Eavesdroppers and Alice's mutual information: $H(\mathcal{Z}: \mathcal{A})$.
    \item Alice and Bob's mutual information: $H(\mathcal{A}: \mathcal{B})$.
\end{enumerate}

\subsection{Minimum probability of detection}

Let us first calculate the eavesdropper detection probability. Recall that, in this protocol, this corresponds to measuring the control qubit to be in the state $|-\rangle \langle -|$ given that it was initially prepared in the state $|+\rangle \langle +|$. Therefore, using $\rho = |y\rangle \langle y|$, the probability of detection is given by
\begin{multline}
    P_{\text{detect}} = \frac12 \sum \limits_{\mathfrak{B}\in \mathsf{C}} \sum \limits_{j,k \in \mathfrak{B}} \langle y \varepsilon \eta| [U_Y, B_j, U_E , A_k]^{\dagger}\\ \times 
    [U_Y, B_j, U_E , A_k] |y \varepsilon \eta \rangle,
\end{multline}
where $|y\varepsilon \eta\rangle := |y\rangle^{(S)} \otimes |\varepsilon\rangle^{(\tilde{E})} \otimes |\eta\rangle^{(\tilde{Y})}$. Now, noting that,
\begin{subequations}
    \begin{alignat}{4}
        U_Y B_j U_E  A_k |y \varepsilon \eta \rangle &= \frac{\langle k| y\rangle}{2} \big( \delta_{jk} |\varepsilon_{kk}\rangle + \delta_{j\bar{k}} |\varepsilon_{k\bar{k}}\rangle \big) \nonumber\\ &\hspace{1.8cm}\big( |j\rangle |\eta_{jj}\rangle + |\bar{j}\rangle |\eta_{j\bar{j}}\rangle \big),\\
        A_k U_E B_j U_Y |\psi \varepsilon \eta \rangle &= \frac12 |k\rangle \big(\delta_{kj} |\varepsilon_{jj}\rangle + \delta_{k\bar{j}} |\varepsilon_{j\bar{j}}\rangle \big)\nonumber\\ &\hspace{1.05cm}\big( \langle j|y\rangle |\eta_{yy}\rangle + \langle j|\bar{y}\rangle|\eta_{y\bar{y}}\rangle\big),
    \end{alignat}
\end{subequations}
where, in the first equation, the order of the qubits was changed for convenience, we can calculate $P_{\text{detect}}$ to be
\begin{align}
\begin{aligned}
    P_{\text{detect}} = \frac12 - \frac18\big[&FF'(3 + \cos x \cos x')  \\
    & + DD'(1 + 3\cos y \cos y')  \\
    &\hspace{1cm} + FD'(\cos x + \cos y') \\
    &\hspace{1.5cm} + DF'(\cos y + \cos x') \big].
\end{aligned}
\end{align}
Recalling that $D = 1- F$ and $D' = 1- F'$, and minimising $P_{\text{detect}}$ over $F,F'$, we find there are two\footnote{There are actually 4 possibilities, but the two options not written explicitly: $F = 0 = D'$ and $F = 1 = D'$, result in larger detection probabilities than the $F = 1 = F'$ case $\forall x,x',y,y'$.} possibilities when minimising the probability of detection $d := P_{\text{detect}}^{\text{min}}$. Note that we can take this approach since $D,F,D',F'$ can be chosen {\it independently} of $x,x',y,y'$.

\vspace{3mm}
\noindent {\it Option 1:} $F = 1 = F'$. Here,
\begin{equation}\label{SecurityMinDetection}
    d = \frac18 (1 - \cos x \cos x').
\end{equation}

\vspace{3mm}
\noindent {\it Option 2:} $F = 0 = F'$. Here,
\begin{equation}
    d = \frac38 (1 - \cos y \cos y').
\end{equation}

Since there are values of $x,y,x',y'$ such that each option is smaller, we must consider both options when calculating the various mutual information values.

\subsection{Eavesdroppers - Alice mutual information}

With the setup we're considering, after Eve and Yves have both carried out their unitaries, they perform some joint measurement on both of their ancillae that best distinguishes between a 0 and a 1 in Alice's key. In order to do this, they should utilise all public information, which means waiting for Alice to reveal her basis choice before choosing which joint measurement to perform on their ancillae. Therefore, in order to find the maximum mutual information (subject to a minimal probability of detection), two optimal measurements must be constructed: one to distinguish $\{\Psi_{0}^{\tilde{E}\tilde{Y}}, \Psi_{1}^{\tilde{E}\tilde{Y}}\}$, and one to distinguish $\{\Phi_{+}^{\tilde{E}\tilde{Y}}, \Phi_{-}^{\tilde{E}\tilde{Y}}\}$ which are the possible states of the eavesdroppers' ancillae when the $z$ and $x$-bases were chosen by Alice and Bob, respectively.

For both cases, the states to be distinguished are found using
\begin{subequations}
    \begin{alignat}{4}
        \Psi_{l}^{\tilde{E}\tilde{Y}} &= \frac{1}{\mathcal{N}} \Tr_{S,C} \big( \rho_{\text{sifted}}\big|_{a=l\in\{0,1\}} \big),\\
        \Phi_{l}^{\tilde{E}\tilde{Y}} &= \frac{1}{\mathcal{N}} \Tr_{S,C} \big( \rho_{\text{sifted}}\big|_{a=l \in \{+,-\}} \big),
    \end{alignat}
\end{subequations}
where $\mathcal{N}$ is a normalisation constant, and $S,C$ indicate that the trace is being carried out over the distributed and control qubit respectively. Further, $\rho_{\text{sifted}}|_{a=l\in \mathfrak{B}}$ denotes the terms in $\rho_{\text{sifted}}$ [Eq.\,(\ref{rhoKeepUnitary})] in which Alice's measurement outcome is $l \in \mathfrak{B}$. Let's consider the two options we found in the previous subsection.

\subsubsection*{Option 1a: $F = 1 = F', l \in \{0,1\}$}

In this first case of Option 1 where $l\in \{0,1\}$,
\begin{equation}
    \Psi_{l}^{\tilde{E}\tilde{Y}} = \varepsilon_{ll} \otimes \eta_{ll}.
\end{equation}
Note the absence of any $|\varepsilon_{l\bar{l}}\rangle, |\eta_{l\bar{l}}\rangle$ terms since $D = D' =0$. Now, as mentioned earlier, in order to maximise the mutual information between Alice and the eavesdroppers, we must find the optimal measurement that distinguishes $\{ \Psi_{0}^{\tilde{E}\tilde{Y}}, \Psi_{1}^{\tilde{E}\tilde{Y}}\}$. We can do this by noticing that these states can be written as $\Psi_{l}^{\tilde{E}\tilde{Y}}=|\Psi_{l}^{\tilde{E}\tilde{Y}}\rangle\langle \Psi_{l}^{\tilde{E}\tilde{Y}}|$, such that
\begin{multline}\label{Pure_state_in_basis}
    |\Psi_{l}^{\tilde{E}\tilde{Y}}\rangle  = \frac{1}{\sqrt2} \bigg[ \sqrt{1 + |\langle \Psi_0^{\tilde{E}\tilde{Y}} | \Psi_{1}^{\tilde{E}\tilde{Y}}\rangle|}|s\rangle 
    \\+ (-1)^l \sqrt{1 - |\langle \Psi_0^{\tilde{E}\tilde{Y}} | \Psi_{1}^{\tilde{E}\tilde{Y}}\rangle|}|t\rangle\bigg],
\end{multline}
where $|s\rangle, |t\rangle$ are some orthonormal vectors in Eve and Yves's shared ancilla space. In our case, $|\langle \Psi_0^{\tilde{E}\tilde{Y}}| \Psi_1^{\tilde{E}\tilde{Y}} \rangle|$ $= \cos x \cos x'$. The optimal measurement to distinguish these states is known to be made up of the following operators \cite{SarahQSD}:
\begin{equation}
    \pi_l = \frac{1}{2}\Big[|s\rangle + (-1)^l |t\rangle \Big]\Big[\langle s| + (-1)^l \langle t|\Big].
\end{equation}

To calculate the mutual information, we use
\begin{align}\label{MI_gen}
    H_{l/\bar{l}}(\mathcal{Z} : \mathcal{A}) &= -\sum \limits_{i\in \{l,\bar{l}\}} P(z_i) \log P(z_i) \nonumber \\
    & \hspace{1cm} - \sum \limits_{j\in\{l,\bar{l}\}} P(a_j) \log P(a_j) \nonumber \\
    & \hspace{1.5cm} + \sum \limits_{i,j\in\{l,\bar{l}\}} P(z_i, a_j) \log P(z_i, a_j),
\end{align}
where the subscript $l/\bar{l}$ is used to highlight that this is the mutual information {\it only} in the $\{|l\rangle, |\bar{l}\rangle\}$-basis case. Here, the outcome $z_i$ corresponds to Eve and Yves performing their joint optimal measurement $\{\pi_k\}$, with outcome $k=i$, and $a_j$ corresponds to Alice measuring $j$, or equivalently (and perhaps more usefully for the calculation), $a_j$ can be thought of as the preparation of the state $\Psi_j^{\tilde{E}\tilde{Y}}$. After some algebra, it turns out that
\begin{equation}\label{ZA_MI}
    H_{0/1}(\mathcal{Z} : \mathcal{A}) = 1 - h\Big[\big(1 + \sqrt{1 - \cos^2 x \cos^2 x'}\big)/2 \Big],
\end{equation}
where $h(q) = -q\log(q) - (1-q) \log (1-q)$ is the binary entropy function \cite{nielsenAndChuang}. Using Eq.\,(\ref{SecurityMinDetection}), this can be rewritten in terms of the minimum detection probability as
\begin{equation}
    H_{0/1}(\mathcal{Z} : \mathcal{A}) = 1 - h\Big[\big(1 + 4\sqrt{d[1-4d]}\big)/2 \Big],
\end{equation}
such that $0 \leq d \leq 1/8$.

\subsubsection*{Option 1b: $F = 1 = F', l \in \{+,-\}$}

Let's now consider the second case of Option 1 in which Alice and Bob work in the $x$-basis. Here,
\begin{equation}
    \Phi_{l}^{\tilde{E}\tilde{Y}} = \frac34 \big(\varepsilon_{ll}\otimes \gamma_l + \varepsilon_{l\bar{l}} \otimes \gamma_{\bar{l}}\big) + \frac14 \big(\varepsilon_{\bar{l}\bar{l}} \otimes \gamma_{\bar{l}} + \varepsilon_{\bar{l}l}\otimes \gamma_{l}\big),
\end{equation}
where $\gamma_l = |\gamma_l \rlangle \gamma_l|$ such that
\begin{align}
        |\gamma_l\rangle &= \frac{1}{\sqrt{2}}\left( |\eta_{00}\rangle + i(-1)^l|\eta_{11}\rangle\right) \nonumber\\
        &\cong \frac{1}{\sqrt{2}}\left( \sqrt{1 + \cos x'} |u\rangle + (-1)^l \sqrt{1- \cos x'} |v\rangle \right),
\end{align}
where, similarly to Eq.\,(\ref{Pure_state_in_basis}), $\{|u\rangle, |v\rangle\}$ is some orthonormal basis on the span of $|\eta_{00}\rangle, |\eta_{11}\rangle$, and the ``$\cong$" denotes equivalence up to a global phase.

The optimal measurement to distinguish $\Phi^{\tilde{E}\tilde{Y}}_+, \Phi^{\tilde{E}\tilde{Y}}_-$ is made up of the projectors $\xi_+, \xi_-$ onto the positive and negative eigenspaces of $\Phi^{\tilde{E}\tilde{Y}}_+ -  \Phi^{\tilde{E}\tilde{Y}}_-$ \cite{SarahQSD}. Since $\langle \varepsilon_{\pm \pm }| \varepsilon_{\pm \mp}\rangle = 0$, these are respectively given by
\begin{subequations}
    \begin{alignat}{4}
        \xi_+ &= \frac{2}{1+\cos x}\varepsilon_{++} \otimes \pi_+ + \frac{2}{1 - \cos x}\varepsilon_{+-} \otimes \pi_-,\\
        \xi_- &= \frac{2}{1+\cos x}\varepsilon_{--} \otimes \pi_- + \frac{2}{1 - \cos x}\varepsilon_{-+} \otimes \pi_+,
    \end{alignat}
\end{subequations}
where
\begin{equation}
    \pi_{\pm} = \frac{1}{2}(|u\rangle \pm |v\rangle)(\langle u| \pm \langle v|)
\end{equation}
make up the optimal measurement to distinguish $\gamma_{\pm}$.

So, using Eq.\,(\ref{MI_gen}), it follows that
\begin{equation}\label{eaves_MI_x}
    H_{\pm}(\mathcal{Z}:\mathcal{A}) = 1 - h\left[ \big(2 + \sqrt{1- \cos x \cos x'}\big)/4 \right]
\end{equation}
which can again be written in terms of the minimum detection probability as follows:
\begin{equation}
    H_{\pm}(\mathcal{Z}:\mathcal{A}) = 1 - h\left[ \big(1 + \sqrt{2d}\big)/2 \right]\!.
\end{equation}

\subsubsection*{Option 2a: $F = 0 = F', ~ l\in \{0,1\}$}

In this case,
\begin{equation}
    \Psi_{l}^{\tilde{E}\tilde{Y}} = \frac{1}{2} ( \varepsilon_{01} \otimes \eta_{10} + \varepsilon_{10} \otimes \eta_{01} ),
\end{equation}
which means that $\Psi_{0}^{\tilde{E}\tilde{Y}} = \Psi_{1}^{\tilde{E}\tilde{Y}}$. This implies Eve and Yves's best strategy is to just guess. Therefore, the mutual information between the eavesdroppers and Alice is $H_{0/1}(\mathcal{Z}: \mathcal{A}) = 0$ when $F = 0 = F'$ and $l\in \{0,1\}$.

\subsubsection*{Option 2b: $F = 0 = F', ~ l\in \{+, -\}$}

In this case,
\begin{equation}
    \Phi_{l}^{\tilde{E}\tilde{Y}} = \frac34 \big(\varepsilon_{ll}\otimes \tau_l + \varepsilon_{l\bar{l}} \otimes \tau_{\bar{l}}\big) + \frac14 \big(\varepsilon_{\bar{l}\bar{l}} \otimes \tau_{\bar{l}} + \varepsilon_{\bar{l}l}\otimes \tau_{l}\big),
\end{equation}
where $\tau_l = |\tau_l \rlangle \tau_l|$ such that
\begin{equation}
        |\tau_l\rangle = \frac{1}{\sqrt{2}}\left( |\eta_{01}\rangle - i(-1)^l|\eta_{10}\rangle\right).
\end{equation}

By the symmetry of this case with that of Option 1b, the mutual information shared between eavesdroppers and Alice is given by
\begin{equation}
    H_{\pm}(\mathcal{Z}:\mathcal{A}) = 1 - h\left[ \big(2 + \sqrt{1- \cos y \cos y'}\big)/4 \right].
\end{equation}
However, when relating this to the detection probability, this case results in a lower value than in the $F=F'=1$ case:
\begin{equation}
    H_{\pm}(\mathcal{Z}:\mathcal{A}) = 1 - h\left[ \big(1 + \sqrt{2d/3}\big)/2 \right]\!.
\end{equation}
This is evident when plotted, as is done in FIG.\,\ref{fig:MI}a.

We end this subsection by noting the mutual information shared between Alice and eavesdroppers, when $F=F'=1$ and we average over the two basis cases:
\begin{multline}\label{ave_MI_EY}
    H(\mathcal{Z} : \mathcal{A}) = 1 - \frac12h\Big[\big(1 + 4\sqrt{d[1-4d]}\big)/2 \Big]\\ - \frac12 h\left[ \big(1 + \sqrt{2d}\big)/2 \right]\!.
\end{multline}
This is plotted in FIG.\,\ref{fig:MI_average}.

\subsection{Alice - Bob mutual information}

The calculation of the mutual information between Alice and Bob, $H(\mathcal{A}: \mathcal{B})$, is less involved than that of $H(\mathcal{Z} : \mathcal{A})$ since Alice and Bob's measurements are fixed. The key thing to note is that the probabilities required for the calculations are found using
\begin{equation}
    P(a_l, b_m) = \Tr\big( \rho_{\text{sifted}} \big|_{k = l, j = m} \big),
\end{equation}
where, similarly to before, $\rho_{\text{sifted}} \big|_{k = l, j = m}$ is made up from the ${k = l, j = m}$ terms in Eq.\,(\ref{rhoKeepUnitary}).

Carrying out all the algebra, we find that, when Alice and Bob measure in the $z$-basis, the mutual information between them is
\begin{equation}
    H_{0/1} (\mathcal{A}: \mathcal{B}) = 1.
\end{equation}
When Alice and Bob measure in the $x$-basis however,
\begin{equation}\label{AB_pm_mututal_info}
    H_{\pm} (\mathcal{A}: \mathcal{B}) = 1 - h\big[ (1+\cos x)/2\big].
\end{equation}
Intuitively, this can be understood when we realise that any errors between Alice and Bob's keys are induced purely by Eve's intervention and not Yves's (this is discussed slightly more in the following subsection). We may therefore expect to see a $\cos x$ dependence but not a $\cos x'$ one.

The form of Eq.\,(\ref{AB_pm_mututal_info}) means we cannot directly determine $H_{\pm} (\mathcal{A}: \mathcal{B})$ [and therefore $H(\mathcal{A}: \mathcal{B})$] using the probability of detection $d$. However, we can use $d$ to put some bounds on $H_{\pm} (\mathcal{A}: \mathcal{B})$. To do this, we look at how Eve and Yves maximise $H (\mathcal{Z}: \mathcal{A})$ for any given value of $d$. Plotting $H_{0/1} (\mathcal{Z}: \mathcal{A})$ with respect to $x,x'$, it can be seen that this maximisation occurs along either the $x = 0$ or $x' = 0$ axis. This results in
\begin{subequations}
    \begin{alignat}{4}
        H_{\pm} (\mathcal{A}: \mathcal{B})|_{x=0} &= 1,\\
        H_{\pm} (\mathcal{A}: \mathcal{B})|_{x'=0} &= 1 - h(4d),
    \end{alignat}
\end{subequations}
where, as before, $0\leq d \leq 1/8$. Thus, if Eve and Yves are aiming to minimise the probability of being detected whilst maximising their knowledge of the shared key,
\begin{subequations}
    \begin{alignat}{4}
        H_{0/1} (\mathcal{A}: \mathcal{B}) &= 1,\\
        H_{\pm} (\mathcal{A}: \mathcal{B}) &\in [1 - h(4d), 1].
    \end{alignat}
\end{subequations}
It is interesting to note that, since $H_{l/\bar{l}}(\mathcal{Z}: \mathcal{A})|_{x = 0}$ and $H_{l/\bar{l}}(\mathcal{Z}: \mathcal{A})|_{x' = 0}$ take the same range of values [as can be seen from Eq.\,(\ref{ZA_MI}) and Eq.\,(\ref{eaves_MI_x})], the eavesdroppers can choose whether or not they want to induce errors in Alice and Bob's shared key whilst extracting information about it. This effectively corresponds to how much impact they allow Eve to have (regardless of Yves). 

Averaging over the different basis choices gives our final range for the mutual information shared between Alice and Bob over the entire protocol:
\begin{equation}\label{ave_MI_AB}
    H (\mathcal{A}: \mathcal{B}) \in \Big[1 - \frac12 h(4d), 1\Big].
\end{equation}
In FIG.\,\ref{fig:MI_average}, we plot these various different mutual information functions, along with $H(\mathcal{A}: \mathcal{Z})$ from Eq.\,(\ref{ave_MI_EY}), with respect to the minimum detection probability $d$ when averaged over the different basis choices of Alice and Bob. Note that in the worst case, Alice and Bob share more information than Alice and the eavesdroppers until a detection probability of $d \approx 0.096$ is induced. This means that up until this point the normal post processing protocols can be undertaken to obtain a secure key between Alice and Bob, at least in the class of attacks considered here \cite{QKDreview}.

We also plot, in FIG.\,\ref{fig:MI}, the various mutual information values for each basis choice. Notice the difference between the $x$ and $z$-basis cases in these plots. This is purely an artifact of how the eavesdroppers' unitaries, and therefore measurements, were set up in a basis dependent way. By an appropriate redefinition of these measurements, we could flip these results and have the eavesdroppers learn something more about the $x$ cases and less about the $z$ ones.

\begin{figure*}
        \centering
    \includegraphics[width=
    0.6\textwidth]{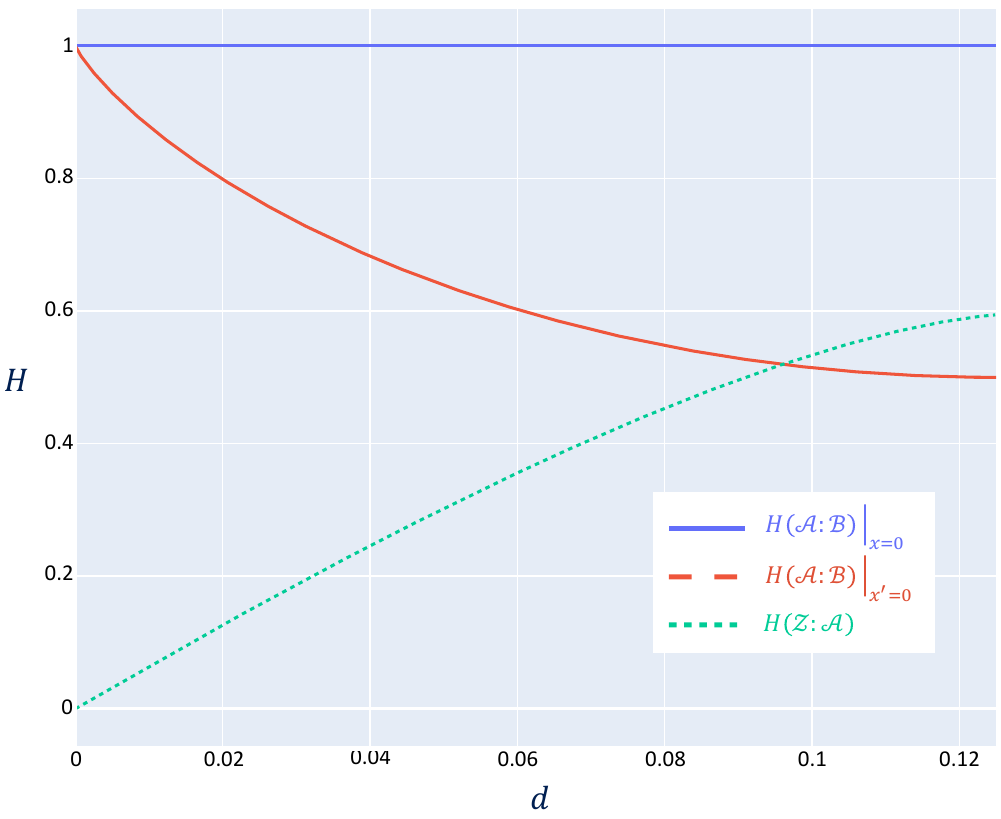}
    \caption{When eavesdroppers Eve and Yves carry out individual (but not fully correlated) attacks, the mutual information (averaged over Alice's basis choices), $H(\mathcal{Z}:\mathcal{A})$, they share with Alice compares to the mutual information between Alice and Bob $H(\mathcal{A}:\mathcal{B})$ [Eq.\,(\ref{ave_MI_AB})] as plotted. These quantities are plotted with respect to the minimum detection probability $d$. In the worst case ($x'$ = 0), Alice and Bob share more information until the detection probability exceeds $d \approx 0.096$.}
    \label{fig:MI_average}
\end{figure*}

\begin{figure*}
        \centering
    \includegraphics[width=
    \textwidth]{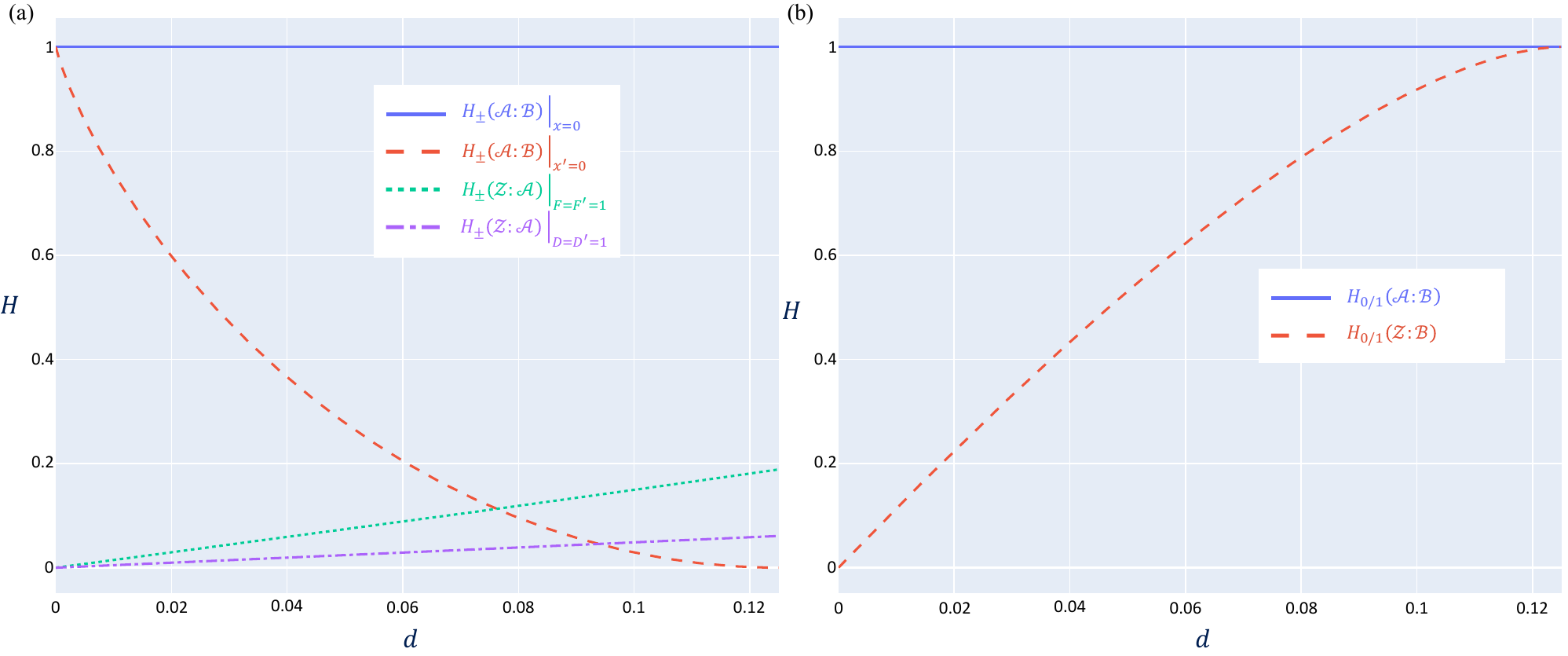}
    \caption{When eavesdroppers Eve and Yves carry out individual (but not fully correlated) attacks, the mutual information between eavesdroppers and Alice $H(\mathcal{Z}:\mathcal{A})$, and the mutual information between Alice and Bob $H(\mathcal{A}:\mathcal{B})$ can differ between Alice and Bob's choice of basis. This is shown by plotting these quantities with respect to the minimum detection probability $d$. In (a), this is shown for the case in which Alice and Bob measure in the $x$-basis, and in (b), Alice and Bob measure in the $z$-basis. Note in (a) the eavesdroppers' mutual information with Alice is always greater in the $F=F'=1$ case than in the $D=D'=1$ case. Indeed, it exceeds Alice and Bob's lower bound at $d \approx 0.076$.}
    \label{fig:MI}
\end{figure*}

\subsection{Error rates}

Let us consider a simple example to gain some intuition as to how the error rate in this protocol differs from its definite causal counterpart. Let us write down the probability of error $P_{\text{error}}$ between Alice and Bob's keys in the case of $F,F' = 1$. Using $P_{\text{error}} = \Tr\big( \rho_{\text{sifted}}\big|_{a \neq b}\big)$, it can be shown that
\begin{equation}\label{ProbError}
    P_{\text{error}} = \frac14(1 - \cos x).
\end{equation}
Note once again that the errors are caused purely by Eve and not Yves. Similarly to in the previous subsection, we can use the maximisation of $H(\mathcal{Z}:\mathcal{A})$ to put bounds on $P_{\text{error}}$. Recall that to do this, we let either $x = 0$ or $x' = 0$, from which it follows that
\begin{equation}
    P_{\text{error}} \in [0, 2d].
\end{equation}

So, let's consider the case in which Eve performs the measurement $\{ |0\rangle\langle0|, |1\rangle\langle 1|\}$. This corresponds to $x = \pi/2$. Using Eq.\,(\ref{SecurityMinDetection}) and Eq.\,(\ref{ProbError}), we can see that $P_{\text{error}} = 1/4$, but $d = 1/8$. This differs from the analogous case in BB84 where the error rate of $1/4$ is used as the detection probability. This is because the effects of the eavesdroppers are not solely contained in the terms used to calculate $P_{\text{detect}}$ (that is, the $\sigma_z \omega \sigma_z$ terms). Some of them lie in the $\omega, \omega \sigma_z$ and $\sigma_z \omega$ terms. It is possible that a different measurement on the control qubit could result in better odds. However, we do not attempt to optimise this measurement here. 

It is interesting to note that the location of the eavesdroppers (performing $\{ |0\rlangle 0|, |1\rlangle1| \}$) dictates the errors induced between Alice and Bob's key. For example, if only Yves is present, there will be no errors found. If, however, Eve is present, regardless of whether Yves is there or not, the probability of error (in this example) between Alice and Bob's key is 1/4, similarly to what is observed in BB84. This phenomenon is true beyond this example: if only Eve is present, every detection event implies an error between Alice and Bob's corresponding key bit, since $[A_k, U_E, A_k] = 0 ~\forall k$. If, however, only Yves is present, there are never any errors induced, since $[U_Y, A_k, A_{\bar{k}}] = 0 ~\forall k$. This is a contrasting point to BB84, where errors are required to detect eavesdroppers. Here, they need not introduce errors to be detected. All in all, this means that, if the location of the eavesdropper(s) is (are) unknown, whether or not errors occur is unknown. These ideas were hinted at in the previous subsection where we saw that the eavesdroppers had the ability to affect how much mutual information Alice and Bob shared.

These ideas allow us to write down an upper bound to the error rate more generally. As shown in Appendix\,\ref{sec:ErrorRateAppendix}, if only Eve is present and induces a detection probability $P_{\text{detect}}$, the error rate is given by $P_{\text{error}}^{E} = 2P_{\text{detect}}$ (where the superscript highlights that only Eve is present). Since Yves does not introduce errors but induces a non-zero detection probability, this provides us with an upper bound to the error rate for any individual attack:
\begin{equation}
    P_{\text{error}} \leq 2P_{\text{detect}}.
\end{equation}

Under the assumption that eavesdroppers are limited to the individual attacks described in this section, recall that the mutual information between the eavesdroppers and Alice is less than or equal to the mutual information shared by Alice and Bob (as shown in FIG.\,\ref{fig:MI}) up until a threshold detection probability $d_{\text{thr}} \approx 0.096$. This means that this protocol is robust to noise that induces an error rate of $P_{\text{error}} \leq 2d_{\text{thr}} \approx 0.192$. This error rate can then be used to inform the normal post processing protocols of error correction and privacy amplification used to obtain a secure key between Alice and Bob \cite{MaQKD}. Of course, the attacks considered here are very limited, so further work is required to understand the information available to Eve and Yves when performing more general attacks, and therefore the true robustness of this protocol to noise. Although this analysis is beyond the scope of this article, we outline how one might go about it in the following section.

\section{Correlated individual attacks and beyond}\label{sec:correlated_and_beyond}

As noted before, the strategies considered until now have not allowed Eve and Yves's operations to be correlated prior to the measurement of their ancillae. It turns out that the presence of eavesdroppers, performing individual attacks\footnote{Recall that by individual attacks, we mean that Eve and Yves act on each distributed state separately.}, can be privately detected regardless of the correlations they share. Assuming that the eavesdroppers do not have access to Alice's lab, and therefore cannot alter the causal structure of the protocol by tampering with the control qubit, we let Eve and Yves do anything permitted by quantum mechanics within their respective labs. We therefore allow them to perform correlated {\it quantum instruments}\footnote{Quantum instruments are defined in Appendix \ref{sec:instruments_etc}.}. To describe this situation we use the \textit{process matrix formalism} \cite{Caslav_Orig_Process} introduced in Appendix \ref{sec:instruments_etc}. Indeed, to account for correlations shared between Eve and Yves, we allow them to act jointly on the state being shared via the quantum switch process matrix, together with their own, separate ancillary process matrix. This process matrix is taken to be arbitrary, meaning it describes any possible set of correlations: Eve and Yves could share entangled states; send quantum and classical information to each other; or even utilise their own indefinite causal structure. 
\begin{figure*}
    \centering
    \includegraphics[width=0.99\textwidth]{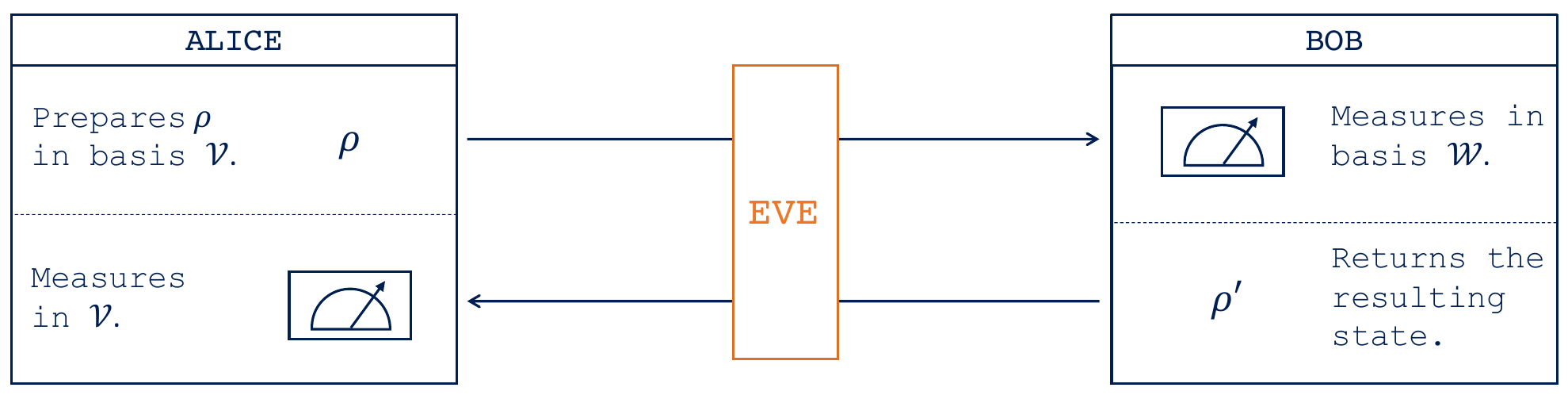}
    \caption{Two way QKD protocol in a definite causal order that realises private detection. The bases $\mathcal{V},\mathcal{W}$ are independently and randomly chosen between the $x$ and $z$-bases.}
    \label{fig:DCOProtocol}
\end{figure*}

Full details of all of this can be found in Appendix \ref{sec:Correlated}, but for clarity, we state our main result here.
\begin{widetext}
\begin{result}
    Define the probability of detection $P_{\text{detect}} = P(\,\omega = |\!\!-\!\rlangle -|\,)$ as the probability of measuring the control qubit to be in the state $|\!-\!\rlangle -|$, and assume eavesdroppers perform arbitrary individual attacks without access to $\omega$. Writing Alice and Bob's measurement outcomes as $i,j$ respectively, and Eve and Yves's joint measurement outcome as $\mathbf{x}$, if $P_{\text{detect}} = 0$, then
    \begin{equation}
    P({\mathbf{x}},i,j) = \begin{cases} \frac{|\langle i|\psi\rangle|^2}{2} \!\left( |r_{\mathbf{x}}^0 + r_{\mathbf{x}}^3|^2 \delta_{ij} + |r_{\mathbf{x}}^1 + r_{\mathbf{x}}^2|^2 \delta_{i\bar{j}} \right)\!, ~~ i,j\in \{0,1\}\\
    \frac{|\langle i|\psi\rangle|^2}{2} \!\left( |r_{\mathbf{x}}^0 + r_{\mathbf{x}}^1|^2 \delta_{ij} + |r_{\mathbf{x}}^2 + r_{\mathbf{x}}^3|^2 \delta_{i\bar{j}}\right)\!, ~~ i,j\in \{+,-\},\end{cases}
\end{equation}
where $|\psi\rlangle \psi| = \rho$ is the initial state of the target system $S$, $r_{\mathbf{x}}^{0},r_{\mathbf{x}}^{1},r_{\mathbf{x}}^{2},r_{\mathbf{x}}^{3} \in \mathbb{C}$, and $\bar{j}$ means ``not $j$".
\end{result}
\end{widetext}
Notice that when undetected (i.e. $P_{\text{detect}} = 0$), Eve and Yves can therefore learn something about Alice and Bob's copies of the keys, but nothing \textit{useful}. They can learn something about which basis was chosen - which is already publicly known - as well as whether errors have occurred between Alice and Bob's keys. Notice also that, similarly to in the previous subsection, Eve and Yves can choose to induce errors in Alice and Bob's key without being detected. But the fact remains that the information the eavesdroppers have access to has no use when it comes to learning about Alice and Bob's key.

One could extend the security proof of Sec.\,\ref{sec:security} to this correlated case by replacing the separable ancilla state $\varepsilon^{\tilde{E}}\otimes \eta^{\tilde{E}}$ shared by Eve and Yves with some shared entangled state $\Phi^{\tilde{E}\tilde{Y}}$. However, to account for fully correlated individual attacks, one would allow eavesdroppers to share an arbitrary process matrix $W^{\tilde{E}\tilde{Y}}$. Intuitively, we would expect both of these approaches to increase the amount of information the eavesdroppers have access to. This intuition comes from the fact that, for uncorrelated attacks, it can be shown that Eve and Yves can learn nothing about the key without introducing a probability of detection. However, as discussed above, when we allow for correlations between their operations, although they still learn nothing useful about the key, they can learn things such as the basis choice of Alice and Bob.

To extend the eavesdroppers strategy further to collective attacks, suppose Alice and Bob store their measurement results in an ancilliary state by performing their measurements using the Kraus operators described in Sec.\,\ref{sec:NoEve}, $A'_i = |i\rangle\langle i|^S \otimes |m_i\rangle \langle m_0|^{\tilde{A}}/\sqrt{2}, ~ B'_i = |i\rangle\langle i|^S \otimes |m_i\rangle\langle m_0|^{\tilde{B}}/\sqrt{2}$ for $i\in \{0,1,+,-\}$:
\begin{equation}
    \rho^S \otimes |m_0\rlangle m_0|^{\tilde{A}} \to \sum_i A_i \rho^S A_i^{\dagger} \otimes |m_i\rlangle m_i|^{\tilde{A}},
\end{equation}
and similarly for Bob. If eavesdroppers perform the same operation on each of the $n$ runs of the protocol, the final state $(\rho^{\tilde{A}\tilde{B}\tilde{E}\tilde{Y}})^{\otimes n}$ shared within Alice, Bob, Eve and Yves' ancilliary systems $\tilde{A}, \tilde{B}, \tilde{E}$ and $\tilde{Y}$ respectively, will be $n$ i.i.d. copies of the same state. Allowing Eve and Yves to perform a global measurement on the final state $\Phi^{\tilde{E}\tilde{Y}} = \Tr_{\tilde{A}\tilde{B}}\big[(\rho^{\tilde{A}\tilde{B}\tilde{E}\tilde{Y}})^{\otimes n}\big]$ of their system, the asymptotic key rate, secure against these {\it collective attacks}, could be calculated via \cite{murta2023lecture}
\begin{equation}
    r = S(\tilde{A}|\tilde{E}\tilde{Y}) - S(\tilde{A}|\tilde{B}),
\end{equation}
where 
\begin{equation}
    S(X|Y) = S(XY) - S(Y),
\end{equation}
$S(V) = -\Tr \big[ \rho^V \log \rho^V \big]$, and $\rho^Y = \Tr_X (\rho^{XY})$. 

The quantum de Finetti theorem could then be employed to account for coherent attacks \cite{renner2008security, Collective_to_Coherent, MaLM05}. If the final state after $n$-rounds with eavesdroppers performing a correlated coherent attack is given by $\Phi^{\tilde{A}\tilde{B}\tilde{E}\tilde{Y}}$, the quantum de Finetti theorem says that having Alice choose a selection of $n' \ll n$ ancilla systems and randomly permuting them results in the state $\big(\sigma^{\tilde{A}\tilde{B}\tilde{E}\tilde{Y}}\big)^{\otimes n'}$ for some unknown $\sigma^{\tilde{A}\tilde{B}\tilde{E}\tilde{Y}}$. It follows that if this protocol is $\epsilon$-secure against collective attacks at generating a key of length $n$ at a rate $r$, then it is $\epsilon'$-secure (where $\epsilon'>\epsilon$) against coherent attacks at generating a key of length $n'$ at a rate $r$ \cite{Collective_to_Coherent}.

The challenge of carrying this general security proof is in the calculation of the von Neumann entropies required to put bounds on $r$: $-\Tr \big(\sigma^{\tilde{A}\tilde{B}\tilde{E}\tilde{Y}} \log \sigma^{\tilde{A}\tilde{B}\tilde{E}\tilde{Y}}\big),$ $ -\Tr \big(\sigma^{\tilde{A}\tilde{B}} \log \sigma^{\tilde{A}\tilde{B}}\big),$ $ -\Tr \big(\sigma^{\tilde{B}} \log \sigma^{\tilde{B}}\big)$. The allowed correlations between eavesdroppers are more numerous than in the definite causal case, meaning the states $\sigma^{\tilde{A}\tilde{B}\tilde{E}\tilde{Y}}$ are more complex.

\section{Private detection in a definite causal order}\label{sec:DefiniteCausalOrder?}

\noindent In this section, we consider whether this same phenomena of private detection could be achieved {\it without} the help of indefinite causal order. Indeed, we provide here evidence that it {\it is} possible. However an extra measurement by Alice is required, and as we shall discuss, some attacks possible here are not possible in the indefinite causal case we have considered. This indicates a more subtle relationship between definite and indefinite causal quantum key distribution protocols.

Suppose, as depicted in FIG.\,\ref{fig:DCOProtocol}, Alice prepares a state $\rho\in \mathcal{L}(\mathcal{V})$ that is either the $x$ or $z$-basis (with corresponding key bits as before). Suppose she then sends $\rho$ to Bob, who measures in the basis $\mathcal{W}$ which, again, is chosen randomly between the $x$ and $z$-bases. Following this, Bob returns the updated state back to Alice who measures it in the {\it same} basis that she prepared the state in: $\mathcal{V}$. As they did in the other sections of this chapter, Alice and Bob then compare which bases they chose, and only use the cases in which they agree for their shared key. Now, by counting how often she measures a different state from the one she prepared, a scenario that we call an error, Alice can monitor for eavesdroppers. This is possible since she knows {\it two things} when there are no eavesdroppers:
\begin{enumerate}
    \item When $\mathcal{V} = \mathcal{W}$, the probability of error $P_{\text{error}}^{\text{same}}$ is zero.\footnote{For example, suppose Alice prepares the state $|\! -\rlangle -|$ and Bob measures it in the basis $\{|\!\pm \rlangle \pm|\}$. Since $\langle + | - \rangle = 0$, Bob is guaranteed to obtain an outcome of $-$, and therefore send the state $|\!-\rlangle -|$ back to Alice. So, when Alice goes on to measure this state in the basis she prepared it in: $\{ |\!\pm\rlangle \pm|\}$, she will always obtain a result of $-$. In other words, she will never register an error.}
    \item When $\mathcal{V}\neq \mathcal{W}$, the probability of error $P_{\text{error}}^{\text{diff}}$ is $1/2$.\footnote{For example, suppose Alice prepares the state $|0\rlangle 0|$ and Bob measures in the basis: $\{|\!\pm\rlangle \pm| \}$. Then the state Alice receives from Bob is $\mathbbm{1}/2$, meaning that when Alice measures this in her original basis $\{ |0/1\rlangle 0/1|\}$, the probability she measures the state to be different to the one she prepared is $\langle 1| \mathbbm{1} | 1\rangle/2 = 1/2$.}
\end{enumerate}
Let's consider an example showing how an eavesdropper's intervention affects at least one of these probabilities. 

Suppose an eavesdropper, Eve, attempts to keep $P_{\text{error}}^{\text{same}}$ at its expected value of zero. As depicted in FIG.\,\ref{fig:DCO_attack}, she can do this by sending out a probe state $\xi$ to Bob, whilst returning Alice's qubit state $\rho$ back to her, unaffected\footnote{In FIG.\,\ref{fig:DCO_attack} this would correspond to setting the operation $\mathcal{E}$ equal to the identity $\mathcal{I}$.}. In this scheme, Eve can access as much information as possible about Bob's key without affecting $P_{\text{error}}^{\text{same}}$. However, this strategy would also result in $P_{\text{error}}^{\text{diff}} = 0 \neq 1/2$, thereby allowing Alice to detect Eve. This is, of course, an extreme case: Eve could attempt to find out which measurement Bob performed and induce errors using some operation $\mathcal{E}$ on $\rho$ to increase $P_{\text{error}}^{\text{diff}}$. But it doesn't seem as though she could ever increase it to 1/2. 

To see this, suppose Eve thinks she knows which measurement Bob chose, for example, $\{ |\pm \rlangle \pm|\}$. Then she can introduce errors in at most 50\% of the bits by acting on $\rho$ with $\sigma_x$: if Alice prepared $|\pm\rangle$, no errors are induced (as Alice would expect if Bob measured in the $x$-basis), if Alice prepared in $|0/1\rangle$, Eve induces an error. The problem is, Bob is choosing randomly between mutually unbiased bases, meaning Eve can only be sure which basis he chose some proportion $p < 1$ of the time\footnote{As is done in Appendix \ref{sec:two-way_attack}, unambiguous quantum state discrimination can be used to analyse this tactic more concretely \cite{SarahQSD,bergouQSD,Bergou_Unambigous_Mixed}.}. It follows that, if Eve wants to preserve $P_{\text{error}}^{\text{same}} = 0$, the attack proposed in FIG.\,\ref{fig:DCO_attack} results in, at most, $P_{\text{error}}^{\text{diff}} = p/2 < 1/2$. We discuss this situation more explicitly in Appendix \ref{sec:two-way_attack}. Of course, more general correlations between Eve's probe state $\sigma$, and how she acts on Alice's state $\rho$ are possible. That being said, we found no strategy that allowed for Eve to extract information whilst leaving both $P_{\text{error}}^{\text{same}}$ and $P_{\text{error}}^{\text{diff}}$ unaltered. We leave the full analysis of this scenario for future work.

So, it may seem as though there is no difference between the indefinite and definite causal cases. However, one thing to notice here is that the attacks discussed here (depicted in FIG.\,\ref{fig:DCO_attack}) are not ones that are possible in our indefinite causal protocol. In order for undetected eavesdroppers to send $\rho$ back to Alice without going via Bob, Eve and Yves would need to do so coherently in the superposition induced by $\omega$. Doing so requires them to have access to the control qubit $\omega$. We discuss possible implications of this realisation in the following section.

\begin{figure*}
    \centering
    \includegraphics[width=0.99\textwidth]{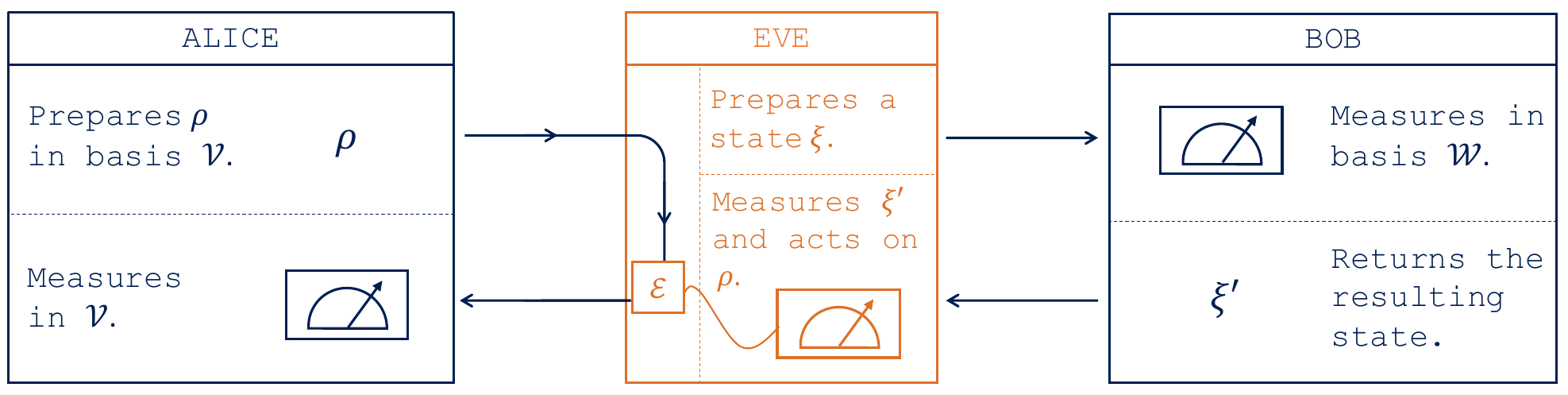}
    \caption{Eavesdropping strategy allowing for Alice's state to be returned to her unaffected whilst letting Eve learn something about Bob's key. If the probability $P_{\text{error}}^{\text{same}}$ is left unaltered at its expected value of zero, we found no strategy that allowed the probability $P_{\text{error}}^{\text{diff}}$ to remain at its expected value of $1/2$.}
    \label{fig:DCO_attack}
\end{figure*}

\section{Discussion and Conclusion}\label{sec:Conc}

In this work we explored the idea of performing the QKD protocol BB84 in an indefinite causal regime. We defined a protocol that achieves this by performing projective measurements in an indefinite causal order. In doing so, we found that it is possible to detect eavesdroppers during a QKD task {\it without} publicly comparing any subset of a shared private key between the two parties involved, Alice and Bob. We found that this could be achieved using a second system that acts as the control in inducing the indefinite causal ordering. In contrast to one-way QKD protocols, but similarly to two-way protocols, there are two locations eavesdroppers can reside, allowing for cooperative attacks. These have both been considered and the security against a class of individual attacks by the eavesdroppers was proven. Further, it was shown (in Appendix\,\ref{sec:Correlated}) that eavesdroppers, who do not have access to the qubit controlling the causal order of the protocol, cannot extract any useful information about the key without inducing a non-zero detection probability, at least when they act on each distributed state individually. Indeed, we showed that this was true for {\it any} cooperative strategy the eavesdroppers used. In Sec.\,\ref{sec:DefiniteCausalOrder?}, we discussed a possible way of privately detecting eavesdroppers using a two-way protocol in a definite causal order. To do this, an extra instance of Alice's operation was required, a property consistent with other discussions of indefinite versus definite causal orderings \cite{Quantum_Switch_original}. 

This ICO QKD protocol could be challenging to realise experimentally due to the requirement to preserve coherence between the two causal orderings over the distance of communication. Further, one might expect projective measurements of $\rho$, performed in an indefinite causal order, to be unattainable. However, as discussed in Appendix \ref{sec:Exp}, the results of this protocol could be simulated using linearly polarised light, a Sagnac interferometer and some polarising filters. The Sagnac interferometer would create the indefinite causal order and the polarising filters would be orientated in various different ways to correspond to each of Alice and Bob's measurement outcomes. More promising is with regards to recent work by H. Cao et al \cite{InstrumentsInICO}, who experimentally performed quantum instruments in an indefinite causal order. This indicates that a full experimental implementation of this protocol could be possible in the not-too-distant future.

When it comes to practicality, consider using a Sagnac interferometer or something similar to create an indefinite causal ordering of operations. In order for the ICO to be legitimate, the coherence length of the light used must be considerably larger than the path length of the interferometer \cite{QuantumSwitch_Romero}, perhaps indicating a limit to how practical such a protocol would be. Another limitation becomes apparent when we notice that two qubits are required to distribute one key bit securely, compared to BB84's one qubit. Perhaps this second, control state could find a secondary use beyond determining the presence of eavesdroppers, but we leave this consideration for future work. We also note that the effects of noise and loss in this protocol have not been considered here. Indeed, being similar in nature to two-way protocols, they are likely to have a substantial impact (in comparison to one-way protocols). Having said this, there is evidence that the effects of noise can be evaded in certain indefinite causal scenarios \cite{Indef_Noise,Channel_ICO,Channel_ICO1, Aaron_ICO} which could make this an interesting line of future research\footnote{Regardless of whether this phenomenon is a unique to indefinite causal order or not \cite{abbott}.}. Finally, it should be noted that the main difference between this approach and previously studied QKD protocols is that a subset of the sifted key need not be publicly compared and subsequently discarded to detect Eve. Since this subset is usually small compared to the length of the sifted key, the cost of performing QKD in an ICO certainly outweighs the benefit of private detection.

Having said this, for reasons stemming from this article, we argue that further exploration into the crossover between quantum cryptography and indefinite causal structures is warranted. Recall that in Sec.\,\ref{sec:DefiniteCausalOrder?} we discussed a definite causal protocol that exhibited similar phenomena to our indefinite casual one. Here, we considered various eavesdropping attacks, in particular the one depicted in FIG.\,\ref{fig:DCO_attack}. We noted that these attacks are not possible in our indefinite causal protocol: undetected eavesdroppers can't send $\rho$ straight back to Alice without having access to the control qubit state $\omega$. This hints at an intriguing thought: do eavesdroppers have to be able to ``hack into" and alter the causal structure of a protocol to realise their full, cryptanalytic potential? Indeed, if so, could this mean that the security of a key sharing protocol is bolstered by obscuring its causal structure? It's plausible that eavesdroppers may be able to somehow gain access to the control qubit in the quantum switch scenario we considered, but how about when different, more obscure, causal structures are used? These more abstract causal structures often require more than two-parties, so could they be employed in different cryptographic scenarios, such as conference key agreement \cite{murtaConferenceKey, RefRef4, RefRef5}? Related to these questions, how do the ideas of device-independent QKD \cite{colbeck_Device_independence,measurement_Device_Independence, Fully_device_independent} translate into a setting where eavesdroppers may be infiltrating the causal structure of the protocol? 

More practically, as mentioned before, the possibility of reducing the effects of noise \cite{Indef_Noise,Channel_ICO,Channel_ICO1, Aaron_ICO} could be another interesting avenue to explore. Could this feature of ICO lead to improved key rates, particularly within two-way protocols where two copies of a noisy channel are naturally present \cite{LM05, beaudry2013security}? On the flip side, by utilising this noise reducing phenomenon, could eavesdroppers somehow use indefinite causal structures to reduce their introduction of errors? This could impact topics involving sequential quantum measurement more broadly, whether it be in quantum state discrimination \cite{Sequential1} or quantum learning \cite{OurPaper}. We therefore consider this work as just an initial step in studying quantum cryptographic protocols within indefinite causal structures.


\begin{acknowledgements}
The author would like to thank Sarah Croke and John Jeffers for the invaluable discussions. The author acknowledges The Engineering and Physical Sciences Research Council and the UK National Quantum Technologies Programme via the QuantIC Quantum Imaging Hub (EP/T00097X/1). This work benefitted from network activities through the INAQT network, supported by the Engineering and Physical Sciences Research Council (grant number EP/W026910/1). This research was supported in part by Perimeter Institute for Theoretical Physics. Research at Perimeter Institute is supported by the Government of Canada through the Department of Innovation, Science and Economic Development and by the Province of Ontario through the Ministry of Colleges and Universities.
\end{acknowledgements}


\appendix

\section{A definite causal QKD attack}\label{sec:two-way_attack}

In Sec.\,\ref{sec:DefiniteCausalOrder?}, we considered an attack, depicted in FIG.\,\ref{fig:DCO_attack} and discussed how $P_{\text{error}}^{\text{same}} =0, P_{\text{error}}^{\text{diff}} =1/2,$ cannot seem to be simultaneously achieved. We expand on this argument in this short appendix.

Suppose Eve sends the probe state $\xi$ to Bob. Depending on Bob's measurement choice, he would send back one of the following two states:
\begin{subequations}\label{xi_updates}
    \begin{alignat}{4}
        \xi_{z} &= \langle 0 | \xi | 0 \rangle |0\rlangle 0| + \langle 1|\xi | 1\rangle |1\rlangle 1|,\\
        \xi_{x} &= \langle + | \xi | + \rangle |\!+\rlangle +| + \langle -|\xi | -\rangle |\!-\rlangle -|.
    \end{alignat}
\end{subequations}
Eve's aim is to use these returned states to introduce or avoid errors in the relevant places. Since she doesn't want to induce any errors when Alice and Bob {\it agree} on their basis choice, only when they \textit{disagree}, she must be sure which basis Bob chose. She could therefore perform unambiguous quantum state discrimination \cite{SarahQSD,bergouQSD,Bergou_Unambigous_Mixed} to distinguish $\{\xi_{z}, \xi_{x} \}$. To do this, she uses the following POVM
\begin{subequations}
    \begin{alignat}{4}
        \pi_{z} &= \alpha\xi_{x}^{\perp},\\
        \pi_{x} &= \beta \xi_{z}^{\perp},\\
        \pi_{?} &= \mathbbm{1} - \pi_x - \pi_z,
    \end{alignat}
\end{subequations}
where $\alpha, \beta \geq 0$, $\Tr(\xi_{w}^{\perp} \xi_{w}) = 0$ for $w = x,z$, and $\pi_?$ corresponds to the inconclusive outcome. 


Now, since $\pi_x = \pi_z = 0$ would make for an uninformative measurement, we assume, without loss of generality, that $\xi_z^{\perp} \neq 0$. It follows that the support of $\xi_z$ is not $\mathbb{C}^2$, and so, from Eq.\,(\ref{xi_updates}), $\exists!\, i\in \{0,1\}$ such that $\langle i | \xi | i \rangle = 0$. Again without loss of generality, let $\langle 1 | \xi | 1 \rangle = 0$ and thus $\xi = |0\rlangle 0 |$. This means that $\xi_z = |0\rlangle 0|, \xi_x = \mathbbm1/2$, and therefore 
\begin{subequations}
    \begin{alignat}{4}
        \pi_z &= 0,\\
        \pi_x &= \beta |1\rlangle 1|,\\
        \pi_? &= |0\rlangle 0| + (1-\beta)|1\rlangle 1|.
    \end{alignat}
\end{subequations}
Minimising the probability of an inconclusive outcome results in $\beta = 1$, meaning the probability that Eve is sure Bob measured in the $x$-basis is $ p = 1/4 < 1$ (assuming Bob picks the $x$-measurement 1/2 of the time). So, by acting on the state $\rho$, being sent back to Alice, with $\sigma_x$ whenever she measures $\pi_x$, Eve induces an error $P_{\text{error}}^{\text{diff}} = 1/8 < 1/2$.

\section{Quantum operations, instruments, the Choi-Jamio{\l}kowski isomorphism and process matrices}\label{sec:instruments_etc}

\subsection{Quantum operations and instruments}

We begin this appendix by defining a {\it quantum operation}. In what follows we take $\mathcal{L}(\mathcal{V})$ to be the space of linear operators on $\mathcal{V}$, and we will label out quantum system by $S$. 
\begin{definition}
    A quantum operation is a map $\mathcal{E}:\mathcal{L}\big(\mathcal{H}^S\big) \to \mathcal{L}\big( \mathcal{H}^{S'} \big)$ satisfying the following three conditions:
    \begin{enumerate}
    \item For any density operator $\rho \in \mathcal{L}\big(\mathcal{H}^S\big)$, $\Tr \mathcal{E}(\rho) \in [0,1]$.
    \item It is {\it convex-linear}. That is, for any convex combination of density operators $\sum_i p_i \rho_i$, \begin{equation*}
        \mathcal{E} \Big( \sum \limits_i p_i \rho_i \Big) = \sum \limits_i p_i \mathcal{E}(\rho_i).
    \end{equation*}
    \item It is completely-positive (CP).
    \end{enumerate}
\end{definition}
\noindent One particularly important class of quantum operations are called {\it quantum channels}. These are subject to the additional constraint that they are {\it trace-preserving}. That is, $\mathcal{E}_c:\mathcal{L}\big(\mathcal{H}^S\big) \to \mathcal{L}\big( \mathcal{H}^{S'} \big)$ is a quantum channel if it is a quantum operation and, for any density operator $\rho \in \mathcal{L}\big(\mathcal{H}^S\big)$, $\Tr \mathcal{E}_c(\rho) = 1$. Therefore, quantum channels are often called {\it completely-positive trace-preserving (CPTP)} maps.

An alternative, but equivalent, way of describing a quantum operation $\mathcal{E}$ mathematically is to use a set of operators $\{ E_i \}$ called Kraus operators, named after K. Kraus who first noted this equivalence \cite{kraus, SarahKraus}. As before, let $\rho$ be a density operator in the input space of $\mathcal{E}$, then there exists a set of operators $\{E_i\}$, subject to
\begin{equation}\label{KrausCondition}
    \sum \limits_i E_i^{\dagger} E_i \leq \mathbb{I},
\end{equation}
that allow us to write
\begin{equation}
    \mathcal{E} (\rho) = \sum \limits_i E_i \rho E_i^{\dagger}.
\end{equation}
Being a sum of positive operators, the condition in Eq.\,(\ref{KrausCondition}) corresponds to the first axiom of quantum operations: $\Tr \mathcal{E}(\rho) \in [0,1]$. If $\mathcal{E}$ is CPTP (i.e. a quantum channel), then this condition becomes an equality.

We are now able to define the mathematical structure we use to describe everything allowed by quantum mechanics within a closed lab: a \textit{quantum instrument} \cite{QuantumInstrument, Caslav_Orig_Process}.
\begin{definition}
    A quantum instrument is a set of quantum operations $\mathcal{M} = \big\{\mathcal{M}_i : \mathcal{L}\big(\mathcal{H}^S\big) \to \mathcal{L}\big(\mathcal{H}^{S'}\big) \big\}$, such that $\sum_i \mathcal{M}_i$ is CPTP.
\end{definition}
\noindent When performing the instrument $\mathcal{M}=\{\mathcal{M}_i\}$, an outcome ``$i$" is registered which tells us that the quantum operation $\mathcal{M}_i$ has occurred. The probability of obtaining an outcome $i$, given that $S$ was prepared in the state $\rho$, is given by the (generalised) Born rule: $P(i|\rho) = \Tr \mathcal{M}_i (\rho)$.

Quantum instruments can rewritten in the Kraus representation. That is, for each outcome $i$ of $\mathcal{M}$, the corresponding operation $\mathcal{M}_i$ can be decomposed as follows:
\begin{equation}
    \mathcal{M}_i (\rho) = \sum \limits_j M_{ij} \rho M_{ij}^{\dagger},
\end{equation}
such that the Kraus operators $\big\{ M_{ij} \big\}$ satisfy
\begin{equation}
    \sum \limits_{i,j} M_{ij}^{\dagger} M_{ij} = \mathbb{I}.
\end{equation}
This corresponds to the requirement that $\sum_i \mathcal{M}_i$ be CPTP in order to ensure that the probabilities $P(i|\rho) = \Tr \mathcal{M}_i (\rho)$ sum to unity. If this wasn't the case, there would be some other possible outcome unaccounted for. Note that it follows that
\begin{equation}
    \sum \limits_i {M_{ij}}^{\dagger} M_{ij} \leq \mathbb{I},
\end{equation}
which corresponds to the definition of a quantum operation, as we'd hope since $\mathcal{M}_i$ is one.

\subsection{The Choi-Jamio{\l}kowski isomorphism and process matrices}\label{sec:Choi_and_process}

In the following appendix, we aim to consider Eve and Yves who can work together with the help of classical, quantum and (indefinite) causal correlations to perform individual attacks. In order to do this, it is convenient to use the \textit{process matrix formalism} \cite{Witnessing_causal_nonseparability}. This formalism essentially only insists that quantum mechanics is obeyed {\it locally}. Indeed, no assumptions are made about the causal structure between parties. 

To utilise this formalism, we employ the \textit{Choi-Jamio{\l}kowsi isomorphism} \cite{dePillis_pre_jamiolkowski,choi,jamiolkowski} to reinterpret quantum instruments as sets of positive semi-definite operators. More specifically, suppose some system $X_I$ with Hilbert space $\mathcal{H}^{X_I}$ is input to some closed lab and is evolved via some CP map $\mathcal{X}: \mathcal{L}(\mathcal{H}^{X_I}) \to \mathcal{L}(\mathcal{H}^{X_O})$ resulting in the system $X_O$, with corresponding Hilbert space $\mathcal{H}^{X_O}$, being output\footnote{For our purposes, this could be the lab of Alice, Bob, Eve or Yves.}. The Choi-Jamio{\l}kowski (CJ) isomorphism details a correspondence between such CP maps $\mathcal{X}$ and positive semi-definite operators $M^X \in \mathcal{L}(\mathcal{H}^{X_I}) \otimes \mathcal{L}(\mathcal{H}^{X_O})$, which we call CJ operators. As we have done here, we will often use the abbreviation $X := X_I X_O$. Explicitly, 
\begin{equation}
M^X = (\mathcal{I} \otimes \mathcal{X}) (|\mathbbm1 \rangle\!\rangle ^{X_I X_I'} \langle\!\langle \mathbbm1|^{X_I X_I'}),
\end{equation} 
where $|\mathbbm1 \rangle\!\rangle^{X_I X_I'} = \sum _{j} |jj\rangle^{X_I X_I'}$ for some \textit{choice} of complete basis $\{|j\rangle\} \subset \mathcal{H}^{X_I}$ and the primed superscript $X_I'$ indicates that the space $\mathcal{H}^{X_I'}$ is a copy of $\mathcal{H}^{X_I}$. For Alice and Bob, we will use $j \in \{0,1\}$. The map $\mathcal{X}$ can be recovered using
\begin{equation}\label{channel_recoved}
\mathcal{X} (\rho) = \Tr_{X_I} \! \left[ (\rho^T \otimes \mathbbm1) M^X \right]
\end{equation}
for some state $\rho$, where the superscript $T$ denotes the transpose with respect to the chosen basis $\{|j\rangle\}$. 

Since quantum instruments are made up of quantum operations (maps which are CP, among other things), we can use the Choi-Jamio{\l}kowski isomorphism to describe quantum instruments as sets of CJ operators. So, suppose two parties, Alice and Bob perform the respective instruments $\mathcal{M}^{A} = \{\mathcal{M}_i^A : \mathcal{L}(\mathcal{H}^{A_I}) \to \mathcal{L}(\mathcal{H}^{A_O})\}, \mathcal{M}^{B} = \{\mathcal{M}_j^B: \mathcal{L}(\mathcal{H}^{B_I}) \to \mathcal{L}(\mathcal{H}^{B_O})\}$ on some system that passes through their labs\footnote{They could be acting on the same system that is shared between them through some quantum channel, they could be acting on their respective parts of an entangled state, and so on - no assumptions are made here.}. The probability that Alice obtains an outcome $i$ and Bob $j$, is given by what we call a process \cite{Caslav_Orig_Process}:
\begin{equation}
    P(i,j) = \Tr \Big[ \big(M^A_i \otimes M^B_j \big)^T W^{AB}\Big],
\end{equation}
where $M^A_i, M^B_j$ are the CJ operators corresponding to $\mathcal{M}_i^A, \mathcal{M}^B_j$ respectively, and $T$ is the transpose with respect to the chosen bases in the definitions of $M^A_i, M^B_j$. Also in this expression is the \textit{process matrix} $W^{AB} ~(\equiv W^{A_IA_OB_IB_O})$ which details all the possible correlations between Alice and Bob's outcomes. It is defined as a linear operator on $\mathcal{H}^{A_I}\otimes \mathcal{H}^{A_O} \otimes \mathcal{H}^{B_I} \otimes \mathcal{H}^{B_O}$ satisfying
\begin{subequations}
    \begin{alignat}{4}
        W^{AB} &\geq 0,\\
        \Tr \Big[ \big(M^A \otimes M^B \big)^T W^{AB}\Big] &= 1
    \end{alignat}
\end{subequations}
for all CJ operators ${M}^A =\sum_i {M}_i^A, ~ {M}^B= \sum_j {M}_j^B$  corresponding to the CPTP maps $\sum_i \mathcal{M}_i^A, ~ \sum_j \mathcal{M}_j^B$ respectively.

\section{Fully correlated eavesdroppers}\label{sec:Correlated}

The aim of this appendix is to see if our protocol withstands eavesdroppers who can work together, as depicted in FIG.\,\ref{fig:YBEA}. In particular, eavesdroppers, performing individual attacks, who cannot alter the indefinite causal structure of the key sharing device. In other words, we do not allow them access to the control qubit $\omega$ of FIG.\,\ref{fig:YBEA}. Let's begin by setting up the problem.

\subsection{Problem setup}

In this setup, we have multiple labs: first, Bob, labelled by $B$ who performs the same channel as always $\mathcal{B}$ with Kraus operators $\{B_j\}$ defined in Eq.\,(\ref{Kraus}). Second, Alice, who we can think of as having two ``sublabs": the one we've discussed before, labeled by $A$ where she performs the channel $\mathcal{A}$ with Kraus operators $\{ A_i \}$, again defined in Eq.\,(\ref{Kraus}). But, following \cite{switch_process_1, Witnessing_causal_nonseparability}, she also has another sublab, labeled by $C$ that takes in the target and control states at the end. That is, we think of this final system, with Hilbert space $\mathcal{H}^{C_t} \otimes \mathcal{H}^{C_c}$, composed from a target component and control component $C_t, C_c$ respectively, as residing within Alice's lab.

Eve and Yves, on the other hand, are only required to obey quantum mechanics locally within their labs, and so can perform quantum instruments $\mathcal{E}, \mathcal{Y}$ respectively. What's more, since they are allowed to work together, not only do they have access to the state being distributed (with the system passing through their labs labeled by $E,Y$ respectively), but they also share some ancillary process matrix $W^{\tilde{E}\tilde{Y}}$ detailing all possible correlations between the outcomes of their quantum instruments. We therefore write their quantum instruments as follows:
\begin{subequations}\label{EveAndYvesInstruments}
    \begin{alignat}{4}
        \mathcal{E} &= \big\{ \mathcal{E}_l: \mathcal{L}(\mathcal{H}^{E_I} \otimes \mathcal{H}^{\tilde{E}_I}) \to \mathcal{L}(\mathcal{H}^{E_O} \otimes \mathcal{H}^{\tilde{E}_O}) \big\}_l,\\
        \mathcal{Y} &= \big\{ \mathcal{Y}_p: \mathcal{L}(\mathcal{H}^{Y_I} \otimes \mathcal{H}^{\tilde{Y}_I}) \to \mathcal{L}(\mathcal{H}^{Y_O} \otimes \mathcal{H}^{\tilde{Y}_O}) \big\}_p,
    \end{alignat}
\end{subequations}
where $E_{I}, Y_{I} ~ (E_{O}, Y_{O})$ are the systems received (sent), by Eve and Yves from (to) Alice and Bob. And $\tilde{E}_{I}, \tilde{Y}_{I} ~ (\tilde{E}_{O}, \tilde{Y}_{O})$ are the input (output) ancillary systems of Eve and Yves respectively. These instruments can be described by the sets of Kraus operators 
\begin{subequations}
    \begin{alignat}{4}
        &\{ E_{lm} : \mathcal{H}^{E_I} \otimes \mathcal{H}^{\tilde{E}_I} \to \mathcal{H}^{E_O} \otimes \mathcal{H}^{\tilde{E}_O}\}_{l,m},\\
        &\{ Y_{pq} : \mathcal{H}^{Y_I} \otimes \mathcal{H}^{\tilde{Y}_I} \to \mathcal{H}^{Y_O} \otimes \mathcal{H}^{\tilde{Y}_O}\}_{p,q},
    \end{alignat}
\end{subequations}
respectively such that $ \sum_{m} E_{lm}^{\dagger} E_{lm}\leq \mathbb{I}^{E_I\tilde{E}_I}$, $\sum_{l,m} E_{lm}^{\dagger} E_{lm} = $ $\mathbb{I}^{E_I\tilde{E}_I},$ and similarly for $\{ Y_{pq} \}$. Further, it will be useful rewrite these Kraus operators using the operator-Schmidt decomposition \cite{Operator_decomp,nielsenThesis}:
\begin{subequations}\label{operator_schmidt}
    \begin{alignat}{4}
        E_{lm} &= \sum \limits_{\alpha} E_{lm}^{\alpha} \otimes \tilde{E}_{lm}^{\alpha},\\
        Y_{pq} &= \sum \limits_{\lambda} Y_{pq}^{\lambda} \otimes \tilde{Y}_{pq}^{\lambda},
    \end{alignat}
\end{subequations}
where $\{E_{lm}^{\alpha}:\mathcal{H}^{E_I} \to \mathcal{H}^{E_O}\}_{\alpha}, ~\{\tilde{E}_{lm}^{\alpha}:\mathcal{H}^{\tilde{E}_I} \to \mathcal{H}^{\tilde{E}_O}\}_{\alpha}$ are sets of orthogonal operators, and $Y_{pq}^{\lambda}, \tilde{Y}_{pq}^{\lambda}$ are defined analogously.

Building off of Refs.\,\cite{switch_process_1, Witnessing_causal_nonseparability}, if we input a pure state $\rho_{\psi} = |\psi \rangle\langle \psi|$ to our quantum switch, controlled on the state $\omega = |+\rangle\langle +|$, the causal structure of the setup shown in FIG.\,\ref{fig:YBEA} is described by the process matrix
\begin{equation}\label{full_process_matrix}
    W_{\psi} = W_{\psi}^{ABEYC} \otimes W^{\tilde{E}\tilde{Y}}.
\end{equation}
Here, $ W_{\psi}^{ABEYC} = |w_{\psi}\rangle\langle w_{\psi}|$, defined via
\begin{widetext}
\begin{equation}\label{process_pure_state}
    |w_{\psi}\rangle = \frac{1}{\sqrt2} \big( |\psi \rangle^{Y_I} |\mathbbm1\rangle\!\rangle ^{Y_O B_I}|\mathbbm1\rangle\!\rangle ^{B_O E_I}|\mathbbm1\rangle\!\rangle ^{E_O A_I} |\mathbbm1\rangle\!\rangle ^{A_O C_t} |0\rangle^{C_c} 
    + |\psi \rangle^{A_I} |\mathbbm1\rangle\!\rangle ^{A_O E_I}|\mathbbm1\rangle\!\rangle ^{E_O B_I}|\mathbbm1\rangle\!\rangle ^{B_O Y_I} |\mathbbm1\rangle\!\rangle ^{Y_O C_t} |1\rangle^{C_c} \big),
\end{equation}
\end{widetext}
is the process matrix of the quantum switch containing Alice, Bob Yves and Eve. Further, as mentioned, $W^{\tilde{E}\tilde{Y}}$ is some arbitrary process matrix shared between Eve and Yves detailing all of their possible correlations. Notice the separable nature of $W_{\psi}$. This is because we are assuming Eve and Yves do not, in some sense, have access to the causal structure of the quantum switch: that is, they do not have access to the control qubit. Therefore, their correlations (and instruments) cannot depend on whether $\rho$ is being sent along the $|0\rangle$ path, or the $|1\rangle$ path.


Let us now write down the CJ operators that describe each lab's operations. For Alice and Bob, being independent, these are given by
\begin{align}
        M^X_k &= (\mathcal{I}\otimes \mathcal{X}_k)\big( |\mathbbm1\rrllangle \mathbbm1|^{X_IX_I'}\big)\\
        &= \sum \limits_{i,j=0}^1  (|i\rangle \langle j|)^{X_I} \otimes (X_k |i\rangle \langle j| X_k^{\dagger})^{X_O} \nonumber\\
        &= \big(|X_k^*\rangle\!\rangle\langle\!\langle X_k^*|^{X}\big)^T, 
\end{align}
where
\begin{equation}
    |X^*_k\rangle\!\rangle^{X} := (\mathbbm1 \otimes X^*_k ) |\mathbbm{1} \rangle\!\rangle^{X}.
\end{equation}
Note that here we are taking $X \equiv X_I X_O\in \{A, B\}$, $X_k \in \{ A_k, B_k \}$ to be the Kraus operators defining the channel $\mathcal{X} \in \{ \mathcal{A}, \mathcal{B} \}$, and we select $\{|i\rangle\} = \{|j\rangle\} = \{|0\rangle, |1\rangle\}$ as our chosen basis for defining $|\mathbbm1\rangle\!\rangle^{X_I X_I'}$.

Let's now move onto the operations making up Eve and Yves's instruments, given in Eq.\,(\ref{EveAndYvesInstruments}). To find the corresponding CJ operators, we choose a basis for their respective Hilbert spaces: 
\begin{subequations}
    \begin{alignat}{4}
        \{|b_{e\tilde{e}}\rangle := |e\rangle^{E_I} |\tilde{e}\rangle^{\tilde{E}_I} \}_{e, \tilde{e}} &\subset \mathcal{H}^{E_I} \otimes \mathcal{H}^{\tilde{E}_I},\\
        \{|c_{y\tilde{y}}\rangle := |y\rangle^{Y_I}  |\tilde{y}\rangle^{\tilde{Y}_I} \}_{y,\tilde{y}} &\subset \mathcal{H}^{Y_I} \otimes \mathcal{H}^{\tilde{Y}_I},
    \end{alignat}
\end{subequations}
where $\{|e\rangle^{E_I} ~|~ e= 0,1 \}, \{|\tilde{e}\rangle^{\tilde{E}_I} \}_{\tilde{e}}$ are our chosen bases for $\mathcal{H}^{E_I}, \mathcal{H}^{\tilde{E}_I}$ respectively, and $\{|y\rangle^{Y_I} ~|~ y = 0,1 \}, \{|\tilde{y}\rangle^{\tilde{Y}_I} \}_{\tilde{y}}$ are our chosen bases for $\mathcal{H}^{Y_I}, \mathcal{H}^{\tilde{Y}_I}$ respectively\footnote{Note that no restrictions are put on the choice of bases for the ancilla systems to preserve generality.}. This allows us to define
\begin{subequations}\label{eavesMaxEntangled}
    \begin{alignat}{4}
        |\mathbbm1\rangle \! \rangle^{E_I\tilde{E}_I, E_I' \tilde{E}_I'} &= \sum \limits_{b_{e\tilde{e}}} |b_{e\tilde{e}} b_{e\tilde{e}}\rangle^{E_I\tilde{E}_I, E_I' \tilde{E}_I'} \nonumber\\
        &= \sum \limits_{e,\tilde{e}} |ee\rangle^{E_I E'_I} |\tilde{e}\tilde{e}\rangle^{\tilde{E}_I \tilde{E}'_I} \nonumber \\
        &= |\mathbbm1\rangle\!\rangle^{E_IE_I'} |\mathbbm1\rangle\!\rangle^{\tilde{E}_I \tilde{E}'_I},\\
        |\mathbbm1\rangle \! \rangle^{Y_I\tilde{Y}_I, Y_I' \tilde{Y}_I'} &= |\mathbbm1\rangle\!\rangle^{Y_IY_I'} |\mathbbm1\rangle\!\rangle^{\tilde{Y}_I \tilde{Y}'_I},
    \end{alignat}
\end{subequations}
where the second expression follows from the same argument of the first. Using the operator-Schmidt decomposition given in Eq.\,(\ref{operator_schmidt}), the corresponding CJ operators are therefore given by
\begin{subequations}
    \begin{alignat}{4}
        M_l^{E\Tilde{E}} &= (\mathcal{I}\otimes \mathcal{E}_l)\Big( |\mathbbm1\rangle\!\rangle \langle \! \langle \mathbbm1|^{E_I\tilde{E}_I, E_I' \tilde{E}_I'} \Big) \nonumber\\
        &= \sum \limits_m \sum_{\alpha,\beta}  \big(|E_{lm}^{\alpha*}\rrllangle E_{lm}^{\beta*}|^E \!\otimes  |\tilde{E}_{lm}^{\alpha*}\rrllangle \tilde{E}_{lm}^{\beta*}|^{\tilde{E}}\big)^T, \\
        M_p^{Y\Tilde{Y}} &= (\mathcal{I}\otimes \mathcal{Y}_p)\Big( |\mathbbm1\rangle\!\rangle \langle \! \langle \mathbbm1|^{Y_I\tilde{Y}_I, Y_I' \tilde{Y}_I'} \Big) \nonumber\\
        &= \sum \limits_q \sum_{\lambda,\mu} \big( |Y_{pq}^{\lambda*}\rrllangle Y_{pq}^{\mu*}|^Y\! \otimes |\tilde{Y}_{pq}^{\lambda*}\rrllangle \tilde{Y}_{pq}^{\mu*}|^{\tilde{Y}}\big)^T,
    \end{alignat}
\end{subequations}
where
\begin{subequations}
    \begin{alignat}{4}
        |E^{\alpha*}_{lm}\rrangle^{E} &= \big(\mathbbm1\otimes E^{\alpha*}_{lm}\big)|\mathbbm1\rrangle^{E_I E_I'},\\
        |\tilde{E}^{\alpha*}_{lm}\rrangle^{\tilde{E}} &= \big(\mathbbm1\otimes \tilde{E}^{\alpha*}_{lm}\big)|\mathbbm1\rrangle^{\tilde{E}_I \tilde{E}_I'},\\
        |Y_{pq}^{\lambda*}\rrangle^Y &= (\mathbbm1\otimes Y_{pq}^{\lambda*})|\mathbbm1\rrangle^{Y_I Y_I'},\\
        |\tilde{Y}_{pq}^{\lambda*}\rrangle^{\tilde{Y}} &= (\mathbbm1\otimes \tilde{Y}_{pq}^{\lambda*})|\mathbbm1\rrangle^{\tilde{Y}_I \tilde{Y}_I'}.
    \end{alignat}
\end{subequations}

We are therefore equipped to see how the initial state of our system $\rho_{\psi}\otimes \omega$ evolves with respect to the process matrix $W_{\psi}$, along with the actions of Alice, Bob, Eve and Yves that we just defined. Following the basis comparison of Alice and Bob,
\begin{widetext}
    \begin{align}
    \rho_{\psi} \otimes \omega &\to 2\sum_{l,p} \sum_{\mathfrak{B} \in \mathsf{C}} \sum_{i,j \in \mathfrak{B}} \Tr_{\bar{C}} \Big[\big(M^A_i \otimes M^B_j \otimes M^E_l \otimes M^Y_p \otimes \mathbb{I}^C\big)^T W_{\psi} \Big] \nonumber \\
    &\,\,= 2\sum_{l,p, m, q} \sum_{\mathfrak{B} \in \mathsf{C}} \sum_{i,j \in \mathfrak{B}} \sum_{\alpha, \beta} \sum_{\lambda,\mu} \Big( \llangle A^*_i| \llangle B^*_j| \llangle E_{lm}^{\alpha*}| \llangle Y_{pq}^{\lambda*}|W^{ABEYC}_{\psi} | A^*_i\rrangle  |B^*_j\rrangle | E_{lm}^{\beta*}\rrangle | Y_{pq}^{\mu*}\rrangle\\
    &\hspace{9cm} \times\llangle \tilde{E}_{lm}^{\alpha*}| \llangle \tilde{Y}_{pq}^{\lambda*}| W^{\tilde{E}\tilde{Y}} | \tilde{E}_{lm}^{\beta*}\rrangle | \tilde{Y}_{pq}^{\mu*}\rrangle\Big),
\end{align}
\end{widetext}
where the factor of two comes from the normalisation of the state following basis comparison and $\Tr_{\bar{C}}$ means we are tracing out all systems but $C = C_t C_c$. This is a rather complicated expression. However, it contains information (about Yves and Eve's correlations) that we don't require for the result of this appendix. To simplify the problem, note first that the positivity of $W^{\tilde{E}\tilde{Y}}$ allows us to write $W^{\tilde{E}\tilde{Y}} = \sum_s |s|^2 |s\rlangle s|$ such that $|s\rangle \in \mathcal{H}^{\tilde{E}} \otimes \mathcal{H}^{\tilde{Y}}$. We can use this to define the Kraus operators $Z_{\mathbf{x}} : \mathcal{H}^{E_I}\otimes \mathcal{H}^{Y_I} \to \mathcal{H}^{E_O} \otimes \mathcal{H}^{Y_O}$ by
\begin{equation}
    Z_{\mathbf{x}} = \sum_{\alpha, \lambda} s \big(\llangle \tilde{E}_{lm}^{\alpha*}| \llangle \tilde{Y}_{pq}^{\lambda*}|\big) |s\rangle E_{lm}^{\alpha} \otimes Y_{pq}^{\lambda},
\end{equation}
where $\mathbf{x} = (l,m,p,q,s) \in \mathbf{X}$, is an index that holds the instrument (measurement) outcome of Eve and Yves and $\mathbf{X}$ is the set containing all such outcomes. In what follows, we forget the structure of these operators and take Eve and Yves's joint action to be some arbitrary set of Kraus operators $\{ Z_{\mathbf{x}}: \mathcal{H}^{E_I}\otimes \mathcal{H}^{Y_I} \to \mathcal{H}^{E_O} \otimes \mathcal{H}^{Y_O} \}$. Doing so means that anything that we find, also holds for the actual situation we have been considering up until this point. 

\subsection{Correlated eavesdroppers cannot learn anything useful using individual attacks}

The conclusion of the previous subsection was that we take to the state $\rho_{\psi} \otimes \omega$ to evolve in the following way during this protocol:
\begin{widetext}
\begin{equation}
    \rho_{\psi} \otimes \omega \to 2\sums \llangle A_i^*|^A \llangle B_j^*|^B\llangle Z_{\mathbf{x}}^*|^{EY} W_{\psi}^{ABEYC} |A_i^*\rrangle^A | B_j^*\rrangle^B |Z_{\mathbf{x}}^*\rrangle^{EY}.
\end{equation}
\end{widetext}
Here,
\begin{align}
    |Z_{\mathbf{x}}^*\rangle\!\rangle &= (\mathbb{I} \otimes Z_{\mathbf{x}}^*) |\mathbbm1\rangle\!\rangle  ^{E_I E_I'}|\mathbbm1\rangle\!\rangle ^{Y_I Y_I'} \\
    &= \sum \limits_{e,y\in\{0,1\}} |ey\rangle^{E_I Y_I} \big(Z_{\mathbf{x}}^* |ey\rangle\big)^{E_O Y_O},
\end{align}
where we followed an analogous argument to Eq.\,(\ref{eavesMaxEntangled}) in order to use $|\mathbbm1\rrangle^{E_IY_I, E_I'Y_I'} = |\mathbbm1\rrangle^{E_IE_I'}|\mathbbm1\rrangle^{Y_IY_I'}$. Now, using Eq.\,(\ref{process_pure_state}) for $W_{\psi}^{ABEYC} = |w_{\psi}\rlangle w_{\psi}|$, we can see explicitly, that after some algebra,
\begin{align}
\begin{aligned}
    \rho_{\psi} \otimes \omega \to \sums|f_{\mathbf{x}ij}\rangle\langle f_{\mathbf{x}ij}|^{C_t C_c},
\end{aligned}
\end{align}
where,
\begin{multline}\label{Fijk}
    |f_{\mathbf{x}ij}\rangle^{C_t C_c} =  \sum \limits_{n\in\{0,1\}} \big[ (\langle n| \otimes \mathbbm1)(B_j\otimes A_i) Z_{\mathbf{x}} |\psi\rangle |n\rangle^{C_t} |0\rangle^{C_c} 
    \\+ (\mathbbm1 \otimes \langle n|) Z_{\mathbf{x}} (B_j \otimes A_i)|n\rangle^{C_t} |\psi\rangle |1\rangle^{C_c}\big].
\end{multline}

We can quickly check our sanity by considering the case when Eve and Yves are not present. That is, when $Z_{\mathbf{x}} \propto \mathbbm1 \otimes \mathbbm1,~\forall \mathbf{x} \in \mathbf{X}$. Here, it turns out that
\begin{equation}
    |f_{\mathbf{x}ij}\rangle^{C_t C_c} \propto \frac{1}{\sqrt{2}} \left( A_i B_j |\psi\rangle^{C_t} |0\rangle^{C_c} + B_j A_i |\psi\rangle^{C_t} |1\rangle^{C_c} \right)
\end{equation}
which is what we'd expect from a quantum switch with two operations \cite{QuantumSwitch_Romero}.

Recall that earlier, we found that when no eavesdroppers are present, measuring the state of the control qubit ${C_c}$ at the end in the $\{|\pm\rangle\}$ basis would always result in $+$. In other words, the probability of measuring $-$, denoted $P(-^{C_c})$, is zero. The question now is, if a correlated Eve and Yves are present, what form must $Z_{\mathbf{x}}$ have in order to ensure $P(-^{C_c}) = 0$? And further, with this form of $Z_{\mathbf{x}}$, can Eve and Yves extract information about the key being shared between Alice and Bob?

\begin{theorem}\label{theorem:Correlated}
For any input state $|\psi\rangle$, $P(-^{C_c}) = 0$ if and only if
\begin{equation}
    Z_{\mathbf{x}} = \sum \limits_{\mu = 0}^3 r_{\mathbf{x}}^{\mu} \sigma_{\mu}\otimes \sigma_{\mu},
\end{equation}
where $r_{\mathbf{x}}^{\mu} \in \mathbb{C}~\forall \mu\in \{0,1,2,3\}, \mathbf{x} \in \mathbf{X}$ and $(\sigma_0, \sigma_1, \sigma_2, \sigma_3) $ $= (\mathbbm{1}, \sigma_x, \sigma_y, \sigma_z)$.
\end{theorem}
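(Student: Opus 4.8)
The plan is to turn the statement into a finite system of linear constraints on $Z_{\mathbf{x}}$ and solve it. Writing $\rho_\psi=|\psi\rlangle\psi|$, recall that after the basis comparison the state of $C_tC_c$ is $\sum_{\mathbf{x}\in\mathbf{X}}\sum_{\mathfrak{B}\in\mathsf{C}}\sum_{i,j\in\mathfrak{B}}|f_{\mathbf{x}ij}\rlangle f_{\mathbf{x}ij}|$, with $|f_{\mathbf{x}ij}\rangle$ given by Eq.\,(\ref{Fijk}). Measuring $|\!-\rlangle -|$ on $C_c$ yields, up to normalisation, $P(-^{C_c})=\sum_{\mathbf{x},\mathfrak{B}}\sum_{i,j\in\mathfrak{B}}\big\|(\mathbbm1^{C_t}\otimes\langle-|^{C_c})\,|f_{\mathbf{x}ij}\rangle\big\|^2$, a sum of non-negative terms. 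Hence, for a fixed $|\psi\rangle$, $P(-^{C_c})=0$ iff $(\mathbbm1^{C_t}\otimes\langle-|^{C_c})|f_{\mathbf{x}ij}\rangle=0$ for every $\mathbf{x}$, every $\mathfrak{B}\in\mathsf{C}$ and every $i,j\in\mathfrak{B}$; the theorem demands that this hold for every $|\psi\rangle$. Reading off from Eq.\,(\ref{Fijk}) the two $C_t$-components $|v_0\rangle,|v_1\rangle$ of $|f_{\mathbf{x}ij}\rangle$ (the coefficients of $|0\rangle^{C_c}$ and $|1\rangle^{C_c}$), the condition becomes $|v_0\rangle=|v_1\rangle$. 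Both sides are linear in $Z_{\mathbf{x}}$ and in $|\psi\rangle$, and they involve $A_i,B_j$ only through the product $B_j\otimes A_i$; in fact $|v_0\rangle$ and $|v_1\rangle$ differ only by the order in which the two tensor factors of $Z_{\mathbf{x}}$ are threaded past $A_i$ and $B_j$, a commutator-type expression, consistent with the no-eavesdropper check $|v_0\rangle-|v_1\rangle=[A_i,B_j]|\psi\rangle$ when $Z_{\mathbf{x}}\propto\mathbbm1\otimes\mathbbm1$. Requiring $|v_0\rangle=|v_1\rangle$ for all $|\psi\rangle$ turns this into an operator identity on $\mathcal{H}^{C_t}\cong\mathbb{C}^2$.

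Next I would pass from the projectors of Eq.\,(\ref{Kraus}) to their spans: $\mathrm{span}\{A_0,A_1\}=\mathrm{span}\{\mathbbm1,\sigma_z\}$, $\mathrm{span}\{A_+,A_-\}=\mathrm{span}\{\mathbbm1,\sigma_x\}$, and likewise for the $B$'s. By linearity the family of conditions indexed by $\{(i,j):i,j\in\mathfrak{B},\ \mathfrak{B}\in\mathsf{C}\}$ is then equivalent to the seven conditions obtained by replacing $(A_i,B_j)$ by each pair $(M,N)\in\big(\{\mathbbm1,\sigma_z\}\times\{\mathbbm1,\sigma_z\}\big)\cup\big(\{\mathbbm1,\sigma_x\}\times\{\mathbbm1,\sigma_x\}\big)$.

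Now expand $Z_{\mathbf{x}}=\sum_{\mu,\nu=0}^{3}c_{\mu\nu}\,\sigma_\mu\otimes\sigma_\nu$ and substitute. Using $\sigma_\kappa\sigma_\mu\sigma_\kappa=\pm\sigma_\mu$ and $[\sigma_\mu,\sigma_\nu]=2i\,\epsilon_{\mu\nu\lambda}\sigma_\lambda$, each of the seven identities collapses to a linear relation among the $c_{\mu\nu}$: the three ``symmetric'' conditions $(\mathbbm1,\mathbbm1)$, $(\sigma_z,\sigma_z)$, $(\sigma_x,\sigma_x)$ force $c_{\mu\nu}=0$ whenever $\mu\ne\nu$ with $\mu,\nu\in\{x,y,z\}$, while the four ``mixed'' conditions $(\sigma_z,\mathbbm1)$, $(\sigma_x,\mathbbm1)$, $(\mathbbm1,\sigma_z)$, $(\mathbbm1,\sigma_x)$ force $c_{0\nu}=c_{\nu0}=0$ for $\nu\in\{x,y,z\}$. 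The surviving coefficients are exactly $c_{00},c_{xx},c_{yy},c_{zz}$, i.e.\ $Z_{\mathbf{x}}=\sum_{\mu}r_{\mathbf{x}}^{\mu}\,\sigma_\mu\otimes\sigma_\mu$ with $r_{\mathbf{x}}^{\mu}=c_{\mu\mu}$. This yields the ``only if'' direction. (As a sanity check, $\{\sum_\mu r^\mu\sigma_\mu\otimes\sigma_\mu\}$ is precisely the $4$-dimensional space of operators diagonal in the Bell basis, so one can alternatively organise the argument around that characterisation.)

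For the converse, substitute $Z_{\mathbf{x}}=\sum_\mu r_{\mathbf{x}}^{\mu}\sigma_\mu\otimes\sigma_\mu$ back into $|v_0\rangle-|v_1\rangle$, which then equals $[A_i,\sum_\mu r_{\mathbf{x}}^{\mu}\sigma_\mu B_j\sigma_\mu]|\psi\rangle$. Conjugating the rank-one projector $B_j=\tfrac1{\sqrt2}|j\rlangle j|$ by any Pauli returns either $\tfrac1{\sqrt2}|j\rlangle j|$ or $\tfrac1{\sqrt2}|\bar{j}\rlangle\bar{j}|$, so $\sum_\mu r_{\mathbf{x}}^{\mu}\sigma_\mu B_j\sigma_\mu$ is diagonal in the basis $\mathfrak{B}$ containing $j$; for $i\in\mathfrak{B}$ the projector $A_i$ is diagonal in the same basis, hence the two commute, the difference vanishes for every $|\psi\rangle$ and every $i,j\in\mathfrak{B}$, and summing the now-zero terms gives $P(-^{C_c})=0$ for all inputs. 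The main obstacle is the bookkeeping in the third paragraph: one must check that exactly these seven relations cut the $16$-dimensional space of candidate $Z_{\mathbf{x}}$ down to the $4$-dimensional Bell-diagonal subspace, and that they do not over-constrain it (the converse confirms consistency).
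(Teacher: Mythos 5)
Your converse direction coincides with the paper's (reduce to $[A_i,\sigma_\mu B_j\sigma_\mu]=0$), and your reorganisation of the forward direction --- bilinearity in $(A_i,B_j)$, passage to the seven spanning pairs in $\big(\{\mathbbm1,\sigma_z\}^{\times 2}\big)\cup\big(\{\mathbbm1,\sigma_x\}^{\times 2}\big)$, and a Pauli-basis expansion $Z_{\mathbf{x}}=\sum_{\mu\nu}c_{\mu\nu}\sigma_\mu\otimes\sigma_\nu$ --- is a genuinely different and cleaner route than the paper's entry-by-entry elimination of the $4\times4$ matrix of $Z_{\mathbf{x}}$. The linear algebra you defer does close up: the $(\mathbbm1,\mathbbm1)$ condition gives $c_{\mu\nu}=c_{\nu\mu}$ on the spatial block, the $(\sigma_z,\sigma_z)$ and $(\sigma_x,\sigma_x)$ conditions then kill the off-diagonal spatial entries, and the four mixed conditions kill $c_{0\nu}$ and $c_{\nu 0}$ in symmetric/antisymmetric pairs, leaving exactly the Bell-diagonal subspace.

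However, there is a genuine gap in the forward direction: you obtain your operator identity on $\mathcal{H}^{C_t}$ only by \emph{quantifying over all} $|\psi\rangle$ (``the theorem demands that this hold for every $|\psi\rangle$''). That is not the hypothesis of the theorem as the paper states and proves it. The paper fixes a \emph{single} input $|\psi\rangle=\alpha|0\rangle+\beta|1\rangle$ and shows that $P(-^{C_c})=0$ for that one state already forces the Bell-diagonal form; this is visible in the intermediate matrix of the paper's proof, whose entries carry explicit $\alpha/\beta$ dependence before being eliminated, and in the separate Hadamard-conjugation argument needed to cover $|\psi\rangle\in\{|0\rangle,|1\rangle\}$. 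Your argument, as written, does not exclude a non-Bell-diagonal $Z_{\mathbf{x}}$ that escapes detection for one particular input state while being detectable for others --- and it is precisely the single-state version that Result 1 relies on, since there the initial state is a fixed $\rho=|\psi\rangle\langle\psi|$. To repair this you would need to show that the vanishing of $(\langle m|\otimes\langle -|)|f_{\mathbf{x}ij}\rangle$ for all $i,j,m$ at a single generic $|\psi\rangle$ already implies it for all $|\psi\rangle$ (which is essentially the content of the paper's longer computation), rather than assuming it.
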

\begin{proof}
First, assume that $P(-^{C_c}) = 0$. This means that
\begin{align}\label{proof1}
    \sums \sum \limits_{m\in\{0,1\}} \left| (\langle m|^{C_t} \langle - |^{C_c})|f_{\mathbf{x}ij}\rangle^{C_t C_c}\right|^2 = 0,
\end{align}
where the sum over $m$ comes about because no measurement is being performed on the target qubit $C_t$. This implies that $(\langle m|^{C_t} \langle - |^{C_c})|f_{\mathbf{x}ij}\rangle^{C_t C_c} = 0 ~\forall i,j\in \mathfrak{B}, \mathfrak\in \mathsf{C},m\in\{0,1\}$ and $\forall \mathbf{x} \in \mathbf{X}$. From here on, unless stated otherwise, let $\mathfrak{B}\in\mathsf{C}, \, i,j\in\mathfrak{B},\,\mathbf{x}\in\mathbf{X},\, m\in\{0,1\}$ be arbitrary. Using Eq.\,(\ref{Fijk}), it follows that
\begin{multline}\label{proof2}
    \sum \limits_{n\in\{0,1\}} \big( \langle n|\langle m| (B_j\otimes A_i)Z_{\mathbf{x}}|\psi\rangle |n\rangle 
    \\- \langle m|  \langle n| Z_{\mathbf{x}} (B_j \otimes A_i) |n\rangle |\psi\rangle \big) = 0.
\end{multline}
Suppose we have an arbitrary, pure input state $|\psi\rangle = \alpha |0\rangle + \beta|1\rangle$, where $\alpha,\beta\in\mathbb{C}$ subject to $|\alpha|^2 + |\beta|^2 = 1$. If we show the theorem to be true for this case, it follows that it is true for any mixed state $\rho =\sum_{\psi} p_{\psi} |\psi\rangle \langle \psi|$ by the linearity of the theory. In order to achieve this, we first of all take $|\psi\rangle \neq |0\rangle, |1\rangle$, that is, $\alpha, \beta \neq 0$.

Note that Eq.\,(\ref{proof2}) must hold for both $i,j\in\{0,1\}$ {\it and} $i,j\in\{+,-\}$. Let us first see what we can find out about $Z_{\mathbf{x}}$ when we take $i,j\in\{0,1\}$. In this case, Eq.\,(\ref{proof2}) has the following form:
\begin{multline}\label{General_Condition}
    \delta_{im}\left(\alpha \langle jm| Z_{\mathbf{x}} |0j\rangle + \beta \langle jm| Z_{\mathbf{x}} |1j\rangle\right) 
    \\= (\alpha \delta_{i0} + \beta \delta_{i1}) \langle mj| Z_{\mathbf{x}} |ji\rangle.
\end{multline}
When $i\neq m$, we can quickly see that $\langle 00|Z_{\mathbf{x}}|01\rangle,$ $ \langle 01|Z_{\mathbf{x}}|11\rangle, \langle 10|Z_{\mathbf{x}}|00\rangle, \langle 11|Z_{\mathbf{x}}|10\rangle = 0$. Next, when $m=i$, $Z_{\mathbf{x}}$ is constrained by
\begin{multline}
    \alpha \langle jm| Z_{\mathbf{x}} |0j\rangle + \beta \langle jm| Z_{\mathbf{x}} |1j\rangle \\= (\alpha \delta_{m0} + \beta \delta_{m1}) \langle mj| Z_{\mathbf{x}} |jm\rangle.
\end{multline}
When $(j,m) = (0,0), (1,1)$ we find that $\langle 00|Z_{\mathbf{x}}|10\rangle,$ $  \langle 11|Z_{\mathbf{x}}|01\rangle = 0$ respectively. And, defining $e_{\mathbf{x}} = \langle 10|Z_{\mathbf{x}}|01\rangle, ~d_{\mathbf{x}}= \langle 01|Z_{\mathbf{x}}|10\rangle$, when $(j,m) = (0,1)$, it turns out that $\langle 01|Z_{\mathbf{x}}|00\rangle = \frac{\beta}{\alpha} (e_{\mathbf{x}} - d_{\mathbf{x}})$, and when $(j,m) = (1,0)$, $\langle 10|Z_{\mathbf{x}}|11\rangle = \frac{\alpha}{\beta} (e_{\mathbf{x}} - d_{\mathbf{x}})$. Here we can see why we have not allowed $\alpha =0$ or $\beta = 0$. 

Taking stock so far, $Z_{\mathbf{x}}$ has the form
\begin{equation}\label{Z_Matrix_1}
    Z_{\mathbf{x}} = \kbordermatrix{ & \langle 00| & \langle 01| &\langle 10| & \langle 11|\\
    |00\rangle & a_{\mathbf{x}} & 0 & 0 & b_{\mathbf{x}}\\
    |01\rangle & \frac{\beta}{\alpha}(e_{\mathbf{x}} - d_{\mathbf{x}}) & c_{\mathbf{x}} & d_{\mathbf{x}} & 0\\
    |10\rangle & 0 & e_{\mathbf{x}} & f_{\mathbf{x}} & \frac{\alpha}{\beta}(e_{\mathbf{x}} - d_{\mathbf{x}})\\
    |11\rangle & g_{\mathbf{x}} & 0 & 0 & h_{\mathbf{x}} },
\end{equation}
where all entries can be complex numbers. This can be simplified further by summing Eq.\,(\ref{proof2}) over $i$ and $j$. This results in
\begin{equation}
    \sum \limits_{n\in\{0,1\}} \langle nm| Z_{\mathbf{x}} |\psi n\rangle = \sum \limits_{n\in\{0,1\}} \langle mn| Z_{\mathbf{x}} |n \psi\rangle,
\end{equation}
which implies
\begin{multline}
    \sum \limits_{n\in\{0,1\}}  \alpha\left(\langle nm| Z_{\mathbf{x}} |0 n\rangle - \langle mn| Z_{\mathbf{x}} |n0\rangle\right) 
    \\= \sum \limits_{n\in\{0,1\}} \beta \left(\langle mn| Z_{\mathbf{x}} |n1\rangle - \langle nm| Z_{\mathbf{x}} |1n\rangle\right),
\end{multline}
which must be true for all $m\in\{0,1\}$. Choosing $m = 0$ and using Eq.\,(\ref{Z_Matrix_1}), we find that $d_{\mathbf{x}} = e_{\mathbf{x}}$. So, we therefore have
\begin{equation}
    Z_{\mathbf{x}} = \begin{pmatrix}a_{\mathbf{x}} & 0 & 0 & b_{\mathbf{x}}\\
    0 & c_{\mathbf{x}} & d_{\mathbf{x}} & 0\\
    0 & d_{\mathbf{x}} & f_{\mathbf{x}} & 0\\
    g_{\mathbf{x}} & 0 & 0 & h_{\mathbf{x}} \end{pmatrix}.
\end{equation}

To finish the derivation, we use the fact that Eq.\,(\ref{proof2}) must also hold for $i,j\in\{+,-\}$. To utilise this, we keep $i,j\in \{0,1\}$ and use the Hadamard operator $H = (\sigma_x + \sigma_z)/\sqrt{2} = H^{\dagger}$ to relate the $x$ and $z$-bases: that is, by replacing $B_j \otimes A_i$ in Eq.\,(\ref{proof2}) with $(H\otimes H)(B_j \otimes A_i)(H\otimes H)$. After some rearranging, we find that
\begin{widetext}
\begin{equation}\label{Z_Constraint_PM}
    \sum \limits_{n\in\{0,1\}} [\delta_{j0} + (-1)^n \delta_{j1}]\langle ji| (H\otimes H)Z_{\mathbf{x}} | \psi n\rangle 
    = \frac12 \frac{(\alpha + \beta)\delta_{i0} + (\alpha - \beta)\delta_{i1}}{\delta_{i0} + (-1)^m \delta_{i1}} \sum \limits_{n\in\{0,1\}} [\delta_{j0} + (-1)^n \delta_{j1}]\langle mn| Z_{\mathbf{x}}(H\otimes H) | ji\rangle.
\end{equation}    
\end{widetext}
Straight away, we can see that the RHS has a dependence on $m$ but the LHS does not. So we can equate the $m = 0$ and $m=1$ cases of the RHS. Doing this, the four cases that come from $i,j\in \{0,1\}$ result in
\begin{align}
    \begin{aligned}
       a_{\mathbf{x}} &= h_{\mathbf{x}},\\
        g_{\mathbf{x}} &= b_{\mathbf{x}} + c_{\mathbf{x}} - f_{\mathbf{x}}.
    \end{aligned}
\end{align}
Updating $Z_{\mathbf{x}}$ and looking at Eq.\,(\ref{Z_Constraint_PM}) when $(i,j,m) = (0,0,0)$ results in $c_{\mathbf{x}} = f_{\mathbf{x}}$ and $b_{\mathbf{x}} = g_{\mathbf{x}}$. Therefore we have
\begin{equation}
    Z_{\mathbf{x}} = \begin{pmatrix}a_{\mathbf{x}} & 0 & 0 & b_{\mathbf{x}}\\
    0 & c_{\mathbf{x}} & d_{\mathbf{x}} & 0\\
    0 & d_{\mathbf{x}} & c_{\mathbf{x}} & 0\\
    b_{\mathbf{x}} & 0 & 0 &a_{\mathbf{x}} \end{pmatrix}
\end{equation}
which can be rewritten as
\begin{multline}
    Z_{\mathbf{x}} = \frac12 \big[ (a_{\mathbf{x}} + c_{\mathbf{x}}) \mathbbm1\otimes \mathbbm1 + (d_{\mathbf{x}} + b_{\mathbf{x}}) \sigma_x \otimes \sigma_x \\
    + (d_{\mathbf{x}} - b_{\mathbf{x}}) \sigma_y \otimes \sigma_y + (a_{\mathbf{x}} - c_{\mathbf{x}}) \sigma_z \otimes \sigma_z \big].
\end{multline}
Further, since the mapping
\begin{equation}
    \begin{cases}
    r_{\mathbf{x}}^0 =a_{\mathbf{x}} + c_{\mathbf{x}},\\
    r_{\mathbf{x}}^1 = d_{\mathbf{x}} + b_{\mathbf{x}},\\
    r_{\mathbf{x}}^2 = d_{\mathbf{x}} - b_{\mathbf{x}},\\
    r_{\mathbf{x}}^3 =a_{\mathbf{x}} - c_{\mathbf{x}}
    \end{cases}
\end{equation}
is invertible and linear, $a_{\mathbf{x}}, b_{\mathbf{x}}, c_{\mathbf{x}}, d_{\mathbf{x}} \in \mathbb{C}$ being independent from one another implies that $r_{\mathbf{x}}^{\mu} \in \mathbb{C}$ are independent from one another. Therefore,
\begin{equation}\label{Z_final}
    Z_{\mathbf{x}} = \sum \limits_{\mu = 0}^3 r_{\mathbf{x}}^{\mu} \sigma_{\mu}\otimes \sigma_{\mu}.
\end{equation}
At this stage, one might notice that we didn't consider all the combinations of $i,j$ in Eq.\,(\ref{Z_Constraint_PM}). It turns out that these give us no further constraints on $Z_{\mathbf{x}}$. To confirm this, we just need to prove the reverse implication of the if and only if statement. If it turns out that we missed some constraints on $Z_{\mathbf{x}}$, $P(-^{C_c})$ would be nonzero in general when using Eq.\,(\ref{Z_final}) for $Z_{\mathbf{x}}$.

So, suppose that $Z_{\mathbf{x}}$ is given by Eq.\,(\ref{Z_final}). Substituting this into $(\langle m|^{C_t} \langle - |^{C_c})|f_{\mathbf{x}ij}\rangle^{C_t C_c}$ and carrying out the sum over $n$ results in
\begin{multline}
    (\langle m|^{C_t} \langle - |^{C_c})|f_{\mathbf{x}ij}\rangle^{C_t C_c} \\= \sum \limits_{\mu = 0}^3 r_{\mathbf{x}}^{\mu}  \langle m | [A_i, \sigma_{\mu} B_j \sigma_\mu] | \psi\rangle = 0
\end{multline}
for all $i,j\in \mathfrak{B}, \mathfrak{B}\in \mathsf{C},m\in\{0,1\}$ and $\forall \mathbf{x}\in\mathbf{X}$ since $[A_i, \sigma_{\mu} B_j \sigma_\mu] = 0 ~ \forall i,j,\mu$.

Finally, for the cases in which $|\psi\rangle \in \{|0\rangle, |1\rangle\}$, notice that what we have shown so far holds for $|\psi\rangle \in \{|+\rangle, |-\rangle\}$. Now, the process matrix defined using Eq.\,(\ref{process_pure_state}) can equivalently be formulated in the $x$-basis, and Alice and Bob's channels are invariant under this basis change. So, after converting everything to the $x$-basis, what was the situation in which $|\psi\rangle = |\!\!+\!\!/-\rangle$ becomes that of when $|\psi\rangle = |0/1\rangle$. Thus, since $Z_{\mathbf{x}}$ [Eq.\,(\ref{Z_final})] has the same form when it is changed from the $z$-basis to the $x$-basis, the result also holds for this case.
\end{proof}

It is difficult to have any intuition about what Eve and Yves's measurement would look like physically. This is because, in the previous subsection, in anticipation of simplifying this proof, we ignored the correlations shared by Eve and Yves through their ancillary process matrix $W^{\tilde{E}\tilde{Y}}$. A little intuition can be gained, however, by considering what happens when only one of the coefficients $r_{\mathbf{x}}^{\mu}$ is nonzero for each $\mathbf{x}$.  In this scenario, if one of Eve or Yves performs $\sigma_{\mu}$, then the other eavesdropper must do the same if they want to go undetected. 

As discussed earlier, an ancillary process matrix shared between Eve and Yves is likely necessary to understand what is happening here physically with regards to the correlations between the eavesdroppers' measurements. Either way, and to reiterate, since the approach taken here considers no physical constraints on the correlations between Eve and Yves, operations with physical correlations should exist as a subset of the ones we have derived here.

\begin{figure*}
    \centering
    \includegraphics[width=0.8\textwidth]{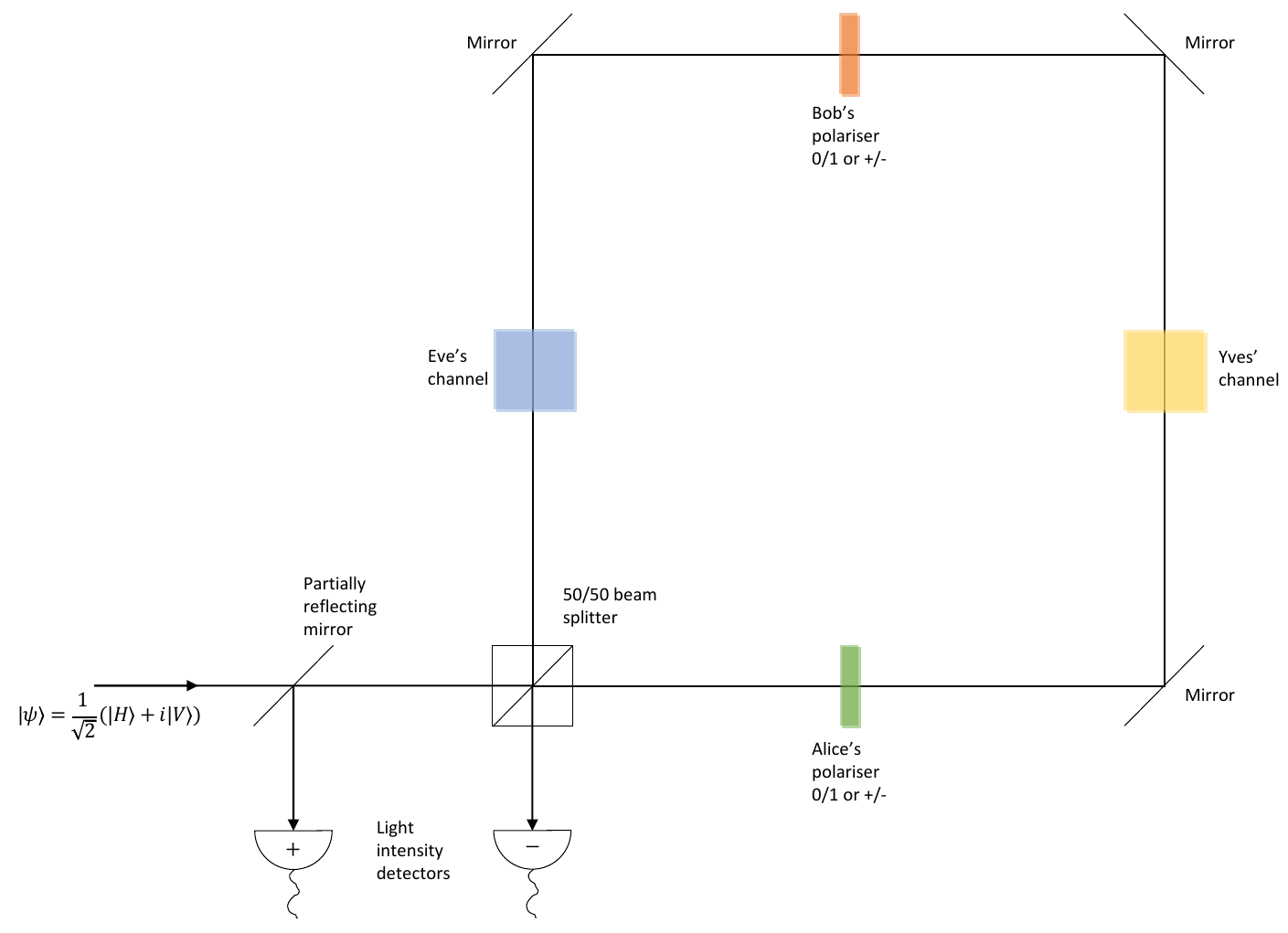}
    \caption{The results derived in this work can be simulated using polarised light to share a key between Alice and Bob, and a Sagnac interferometer to induce the indefinite causal order. Within the interferometer, polarising filters are orientated to correspond to all the valid measurement outcomes Alice and Bob obtain during the protocol.}
    \label{fig:Sagnac}
\end{figure*}

So, can Eve and Yves gain any information about Alice and Bob's shared key using Eq.\,(\ref{Z_final})? To answer this, we calculate $P(Z_{\mathbf{x}}, A_i, B_j) = \langle f_{\mathbf{x}ij}|f_{\mathbf{x}ij}\rangle$ for $i,j\in\{0,1\}$ and $i,j\in\{+,-\}$. These are given by:
\begin{widetext}
\begin{equation}
    P(Z_{\mathbf{x}},A_i,B_j) = \begin{cases} \frac{|\langle i|\psi\rangle|^2}{2} \!\left( |r_{\mathbf{x}}^0 + r_{\mathbf{x}}^3|^2 \delta_{ij} + |r_{\mathbf{x}}^1 + r_{\mathbf{x}}^2|^2 \delta_{i\bar{j}} \right)\!, ~~ i,j\in \{0,1\}\\
    \frac{|\langle i|\psi\rangle|^2}{2} \!\left( |r_{\mathbf{x}}^0 + r_{\mathbf{x}}^1|^2 \delta_{ij} + |r_{\mathbf{x}}^2 + r_{\mathbf{x}}^3|^2 \delta_{i\bar{j}}\right)\!, ~~ i,j\in \{+,-\},\end{cases}
\end{equation}
\end{widetext}
where $\bar{j}$ denotes ``not $j$". At first glance, it appears that Eve and Yves have access to some information about Alice and Bob's key. However, note first that distinguishing between the two cases of $i,j\in \{0,1\}$ and $i,j\in \{+,-\}$ is of no use as Alice and Bob publicly discuss which basis they measured in after they have done so. Secondly, although Eve and Yves could alter $r_{\mathbf{x}}^{\mu}$ however they like, the only additional information they could gain is about whether each bit of Alice and Bob's key agree or not. Therefore, they can still do no better than a guess to determine the key. This intuitive argument is backed up by a calculation of the mutual information between the eavesdroppers and either Alice or Bob. In both cases, this turns out to be zero.

Finally, it should be noted this protocol has a weakness if we allow the eavesdroppers alter the causal structure set up by Alice. That is, if they had access to the control qubit that dictates the indefinite causal order of the quantum switch. In this case, Eve and Yves could perform a similar attack to the one described in Sec.\,\ref{sec:DefiniteCausalOrder?}. That is, if the eavesdroppers return Alice's qubit unaffected, and in an indefinite causal order, whilst, independently sending out a probe state to Bob, they can learn about Bob's key bit, without inducing any $``-"$ measurement results in the control mode. Having said this, it seems as though this eavesdropping strategy can be detected by monitoring $\omega$ in the cases when Alice and Bob disagree on their basis choice. The analysis of such attacks, in which eavesdroppers can infiltrate the causal structure of the protocol, is beyond the scope of this paper.

\subsection{Error rates}\label{sec:ErrorRateAppendix}

In this appendix, assuming our input target state is\footnote{Where $|y\rangle = (|0\rangle + i |1\rangle)/\sqrt2$.} $\rho = |y\rlangle y|$, we prove that  if only Eve is present and induces a probability $P_{\text{detect}} = P(\omega=|-\rlangle-|)$ of being detected, then the error rate of the sifted key shared by Alice and Bob is
\begin{equation}
    P_{\text{error}} = 2P_{\text{detect}}.
\end{equation}

To show this, we use Eq.\,(\ref{keepState}) and let Eve's Kraus operators be written as
\begin{equation}
    E_j = e^{(j)}_{bb}|b\rlangle b| + e^{(j)}_{b\bar{b}}|b\rlangle \bar{b}| + e^{(j)}_{\bar{b}b}|\bar{b}\rlangle b| + e^{(j)}_{\bar{b}\bar{b}}|\bar{b}\rlangle \bar{b}|,
\end{equation}
where $e^{(i)}_{bb'} \in \mathbb{C},~ \forall i$ and $b,b' \in \mathfrak{B}$. Note that we are taking $e^{(i)}_{bb'}$ to depend on the specific basis: if $b,b'\in\{0,1\}$, $e^{(i)}_{bb'}$ is generically not the same as when $b,b' \in\{+,-\}$. As always, we take $\bar{b}$ to mean ``not $b$". From Eq.\,(\ref{keepState}), we see that the probability of detection is given by
\begin{equation}
    P_{\text{detect}} = \frac12\sum \limits_j \sum \limits_{\mathfrak{B} \in \mathsf{C}} \sum \limits_{i,k\in \mathfrak{B}} \langle y|[A_{i}, E_{j}, B_{k}]^{\dagger} [A_{i}, E_{j}, B_{k}]|y\rangle.
\end{equation}
After some algebra, this can be shown to be equal to
\begin{equation}
    P_{\text{detect}} = \frac18 \sum_j \Big( \big|e^{(j)}_{01}\big|^2 + \big|e^{(j)}_{10}\big|^2 + \big|e^{(j)}_{+-}\big|^2 + \big|e^{(j)}_{-+}\big|^2\Big).
\end{equation}

Once again using $\rho_{\text{sifted}}$ from Eq.\,(\ref{keepState}), we can calculate the probability of error between Alice and Bob's sifted key bit. This corresponds to Alice and Bob disagreeing on their measurement outcomes. So, taking $\rho_{\text{sifted}}|_{a\neq b}$ to denote the terms in $\rho_{\text{sifted}}$ where Alice and Bob's outcomes disagree, it can be shown that
\begin{align}
    P_{\text{error}} &= \Tr \big( \rho_{\text{sifted}}|_{a\neq b}\big) \nonumber\\
    &= \frac14 \sum_j \Big( \big|e^{(j)}_{01}\big|^2 + \big|e^{(j)}_{10}\big|^2 + \big|e^{(j)}_{+-}\big|^2 + \big|e^{(j)}_{-+}\big|^2\Big) \nonumber\\
    &= 2P_{\text{detect}}.
\end{align}


\section{Experimental simulation}\label{sec:Exp}

Figure \ref{fig:Sagnac} shows a possible experimental setup to simulate some of the results derived. The idea is to use photon polarisation (in the horizontal-vertical basis, with $|H\rangle =: |0\rangle, |V\rangle =: |1\rangle$) as the target state $\rho$, initially in the state $|\psi\rangle\langle\psi|$, that is acted on by Alice, Bob, Eve and Yves. As is mentioned in the main text, if we wanted Alice and Bob to have approximately equal numbers of 0s and 1s, we can take our input state to be $\mathbbm{1}/2$. This can be achieved by taking it to be $|i\rangle$ half of the time and $|-i\rangle$ the remainder of the time. These correspond to left and right circularly polarised light respectively: $|\psi\rangle \in \{|\pm i\rangle =  (|H\rangle \pm i|V\rangle)/\sqrt{2}\}$. The control state $\omega$ is taken to be the path degree of freedom induced by a beamsplitter. Using a 50/50 beamsplitter corresponds to taking $\omega = |+\rangle\langle +|$ with $|0\rangle$ corresponding to reflection and $|1\rangle$ to transmission.

Recall that Alice and Bob perform projective measurements in either the $x$ or $z$-basis. This is difficult to do non-destructively and even more difficult to do while keeping the photon continuing around the Sagnac interferometer in its original superposition of paths. That being said, there has been recent progress along these lines \cite{InstrumentsInICO}. In this appendix, however, we just consider a simulation of projective measurements using polarisers. This means we can simulate the statistics that the measurements of Alice, Bob, Eve and Yves would produce.

Explicitly, when Alice and Bob measure in the $z$-basis, we use polarisers orientated at $0$ and $\pi/2$ which correspond to measurement outcomes of 0 and 1 respectively. Likewise, when measuring in the $x$-basis, polarisers being orientated at $\pm \pi/4$ correspond to measurement outcomes of $\pm$ respectively. The probability of Alice and Bob measuring $i,j$ can be taken to be the ratio of the total intensity $I_{\text{exit}}(i,j)$ of light exiting the interferometer to that of it entering $I_{\text{enter}}$:
\begin{equation}
    P(A_i, B_j) = \frac{I_{\text{exit}}(i,j)}{I_{\text{enter}}}.
\end{equation}
Here, the dependence of $I_{\text{exit}}(i,j)$ on $i,j$ highlights that the interferometer is setup with Alice and Bob's polarisers being orientated correspondingly to the measurement outcomes $i,j$ respectively. Since Alice and Bob only keep measurement results when they have publicly confirmed that they measured in the same basis, there are eight permutations when ignoring Eve and Yves. These are given in the Table\,\ref{table}.
\begin{table}[ht!]
\caption{Table detailing the eight polariser orientations that correspond to the possible measurement outcomes that Alice and Bob can obtain when they measure in the same basis.}
\label{table}
\vspace{3mm}
\centering
\begin{tabular}{c|c}
Alice polariser orientation & Bob polariser orientation \\ \hline
0                           & 0                         \\
0                           & $\frac{\pi}{2}$           \\
$\frac{\pi}{2}$             & 0                         \\
$\frac{\pi}{2}$             & $\frac{\pi}{2}$           \\
$\frac{\pi}{4}$             & $\frac{\pi}{4}$           \\
$\frac{\pi}{4}$             & $-\frac{\pi}{4}$          \\
$-\frac{\pi}{4}$            & $\frac{\pi}{4}$           \\
$-\frac{\pi}{4}$            & $-\frac{\pi}{4}$         
\end{tabular}
\end{table}

The key feature of this protocol involves the measurement of the control state in the $\pm$ basis. Noticing that, after going through the main part of the Sagnac interferometer, the path that the light exits the 50/50 beamsplitter along, is controlled by the path qubit in the $x$-basis. That is, the $|+\rangle$ component is transmitted through the beamsplitter, whereas the $|-\rangle$ component is reflected. Therefore, placing a detector in the reflected arm corresponds to the $-$ outcome and, after a partially reflecting mirror, a detector in the transmitted arm corresponds to the $+$ outcome. The probability of measuring the eavesdroppers comes from the probability of measuring the control qubit state to be $-$. Therefore, for each run of the experiment (each permutation of polariser angles), the ratio of the intensity in the $-$ arm to the total intensity exiting the interferometer is what is required. As a sanity check, this should always be zero when Eve and Yves are not present. As mentioned before, in order to exploit the features of indefinite causal order, the coherence length of the light used should be significantly longer than the path length of the interferometer. A laser can be used to achieve this.



\bibliography{apssamp}
\end{document}